\definecolor{bluegray}{RGB}{230,230,255}
\definecolor{paleyellow}{RGB}{255,255,204}
\tikzstyle{block}=[draw=black, rectangle, align=center, fill=bluegray, drop shadow]
\tikzstyle{rblock}=[draw=black, shape=rectangle,rounded corners=1em,align=center,fill=paleyellow, drop shadow]
\DeclareMathOperator*{\argmin}{argmin}
\DeclareMathOperator{\Tr}{Tr}
\DeclareMathOperator{\aff}{aff}
\DeclareMathOperator{\spn}{span}
\DeclareMathOperator{\conv}{conv}
\DeclareMathOperator{\rng}{rng}
\DeclareMathOperator{\supp}{supp}
\DeclareMathOperator{\mvee}{\mathtt{MVEE}}
\DeclareMathOperator{\vol}{\mathcal{V}}
\DeclareMathOperator{\ddi}{\mathtt{ddi}}
\newcommand{\mc}[1]{\mathcal{#1}}
\newcommand{\bb}[1]{\mathbb{#1}}
\newcommand{\N}{\bb{N}}
\newcommand{\R}{\bb{R}}
\newcommand{\M}[2]{\mathfrak{M}_{{#1}  \times {#2}}}
\newtheorem{thm}{Theorem}
\newtheorem{lmm}{Lemma}
\newtheorem{cor}{Corollary}
\newtheorem{prop}{Proposition}
\theoremstyle{definition}
\newtheorem{dfn}{Definition}
\newtheorem*{example*}{Example}
\begin{document}

\title{Data-Driven     Inference,    Reconstruction,     and
  Observational Completeness of Quantum Devices}

\preprint{YITP-20-151}

\date{\today}

\author{Michele \surname{Dall'Arno}}

\email{dallarno.michele@yukawa.kyoto-u.ac.jp}

\affiliation{Yukawa Institute for Theoretical Physics, Kyoto
  University,   Kitashirakawa   Oiwakecho,  Sakyoku,   Kyoto
  606-8502, Japan}

\affiliation{Faculty  of Education  and Integrated  Arts and
  Sciences,    Waseda    University,   1-6-1    Nishiwaseda,
  Shinjuku-ku, Tokyo 169-8050, Japan}

\author{Francesco \surname{Buscemi}}

\email{buscemi@i.nagoya-u.ac.jp}

\affiliation{Graduate  School  of  Informatics,  Nagoya
  University, Chikusa-ku, 464-8601 Nagoya, Japan}

\author{Alessandro \surname{Bisio}}

\affiliation{Quit    group,    Dipartimento    di    Fisica,
  Universit\`a  degli studi  di  Pavia, via  Bassi 6,  27100
  Pavia, Italy}

\affiliation{Istituto Nazionale  di Fisica  Nucleare, Gruppo
  IV, via Bassi 6, 27100 Pavia, Italy}

\author{Alessandro \surname{Tosini}}

\affiliation{Quit    group,    Dipartimento    di    Fisica,
  Universit\`a  degli studi  di  Pavia, via  Bassi 6,  27100
  Pavia, Italy}

\affiliation{Istituto Nazionale  di Fisica  Nucleare, Gruppo
  IV, via Bassi 6, 27100 Pavia, Italy}

\begin{abstract}
  The range of  a quantum measurement is the  set of outcome
  probability distributions that can  be produced by varying
  the input  state. We introduce data-driven  inference as a
  protocol  that, given  a  set of  experimental  data as  a
  collection  of outcome  distributions, infers  the quantum
  measurement which is, i) consistent  with the data, in the
  sense  that  its  range  contains  all  the  distributions
  observed, and,  ii) maximally  noncommittal, in  the sense
  that  its range  is  of  minimum volume  in  the space  of
  outcome distributions.  We show that data-driven inference
  is able to return a unique measurement for any data set if
  and only if the inference adopts a (hyper)-spherical state
  space (for example, the classical or the quantum bit).
  
  In  analogy  to  informational  completeness  for  quantum
  tomography,  we define  observational completeness  as the
  property  of any  set of  states that,  when fed  into any
  given measurement, produces a set of outcome distributions
  allowing for the correct reconstruction of the measurement
  via  data-driven inference.   We  show that  observational
  completeness  is  strictly   stronger  than  informational
  completeness, in  the sense  that not  all informationally
  complete   sets   are   also   observationally   complete.
  Moreover,   we    show   that    for   systems    with   a
  (hyper)-spherical  state space,  the only  observationally
  complete simplex is the regular one, namely, the symmetric
  informationally complete set.
\end{abstract}

\maketitle

\section{Introduction}

In  quantum theory,  as a  consequence of  the Born  rule, a
measurement can always be seen  as a linear mapping from the
set  of states  (i.e., density  operators) into  the set  of
probability     distributions    over     the    measurement
outcomes\footnote{Since the Born rule  is, in fact, bilinear
  in the state-measurement pair,  also the opposite is true,
  namely,  that  any state  induces  a  linear mapping  from
  measurements  into  probability distributions.   For  this
  reason,  in the  appendix the  formalism is  developed for
  both states  and measurements.   However, for the  sake of
  concreteness,  the  narrative  in  the  Main  Text  mostly
  follows  the task  of measurement  inference.}.  In  fact,
some   axiomatic    approaches   \textit{identify}   quantum
measurements with the set of  such mappings: in such a case,
the resulting distribution, i.e., the  image of the state of
the system undergoing the  measurement, receives the natural
operational   interpretation   of  distribution   over   the
measurement outcomes~\cite{OZAWA198011}.

When thinking of measurements  as linear mappings, the image
of   the  set   of   \textit{all}  states   under  a   given
measurement---also known as  the measurement's range---turns
out to  be a very  important mathematical object  in quantum
measurement   theory.   For   example,  given   two  quantum
measurements, the  range of  one includes  the range  of the
other, if and only if the  former can simulate the latter by
means        of         a        suitable        statistical
transformation~\cite{doi:10.1063/1.2008996, 10.2307/2236332,
  Buscemi2012}, independently  of the state  being measured.
Quantum measurements,  hence, can  be compared  by comparing
the   corresponding  ranges,   thus   establishing  a   deep
connection between quantum measurement theory and the theory
of           majorization          and           statistical
comparison~\cite{marshall1979inequalities,  torgersen_1991},
with    ramified   consequences    in   both    theory   and
applications~\cite{cohen1998comparisons,          LIEB19991,
  PhysRevA.93.012101,                Gour:2018aa,DAHL199953,
  shmaya2005comparison,                      SHANNON1958390,
  korner1977comparison,                liese2007statistical,
  csiszar2011information}.

In  this   paper  we  exploit  the   correspondence  between
measurements and their ranges to  propose a method to derive
an  inference about  an unknown  quantum measurement,  based
solely on  the outcome  distributions observed,  without any
knowledge  about the  exact states  that gave  rise to  such
distributions.  As we observe in what follows, such a method
can be naturally divided into  two parts. In the first part,
one  defines  an  inference  rule, which  formulates  in  an
abstract way the rules that  we choose to use when reasoning
in the presence of  incomplete information.  For the problem
at  hand,  such rules  accept  as  input  a set  of  outcome
distributions  and  return  as   output  a  set  of  quantum
measurements.  For  this reason, we name  our inference rule
``data-driven  inference  (DDI) of  quantum  measurements.''
The measurements  inferred via  DDI are consistent  with the
input data and are maximally noncommittal, in the sense that
their  ranges contain  the  input data  and  are of  minimum
volume in the space of  outcome distributions.  DDI need not
aim to infer the ``true'' quantum measurement, as there need
not be any such ``entity'' at this stage. As an application,
we implement an algorithm for  the machine learning of qubit
measurements based on data-driven  inference, and we test it
on data generated by the IBM Q Experience quantum computer.

In the  second part,  one needs  to show  that it  is indeed
possible to construct a real experiment so that DDI leads to
the correct assignment for the unknown measurement. The goal
here  is   reminiscent  of  that  of   conventional  quantum
measurement                                       tomography
(\cite{PhysRevLett.83.3573,PhysRevLett.86.4195,PhysRevLett.93.250407,fiuravsek2001maximum,PhysRevLett.102.010404,
  lundeen2009tomography,feito2009measuring}),   namely,  the
reconstruction of an unknown measurement from the statistics
collected in  a sequence  of experimental  trials.  However,
while the data analysis  performed in measurement tomography
requires the knowledge of the  states that were fed into the
unknown  measurement, DDI  reconstruction only  requires the
analysis   of   the    bare   outcome   distributions:   the
state-preparator  could,  for   example,  emit  a  different
unknown state at each repetition  of the experiment, and DDI
reconstruction  would  still be  applicable\footnote{On  the
  contrary,  we \textit{need}  to  assume  that the  unknown
  measurement to  be reconstructed  remains the  same during
  the   entire   experiment---otherwise   the   problem   of
  reconstruction would not even be well-defined.}.

In  what  follows,  we  expound the  theory  of  data-driven
inference and reconstruction for finite dimensional systems.
As this is based  on the correspondence between measurements
and  their  ranges,  three   main  problems  arise  and  are
addressed.

The first problem is to seek for a general method to infer a
range given  a set of  outcome distributions. As  a possible
solution, in  what follows, we propose  that the measurement
range to be  inferred, in the face of a  set of experimental
data, should be the  \textit{smallest one containing all the
  observed data}.  Recalling that the range of a measurement
is  directly  related with  the  ability  to simulate  other
measurements~\cite{doi:10.1063/1.2008996}, our  principle is
equivalent  to  say that  the  measurements  to be  inferred
should be  the weakest  possible, compatibly with  the data.
Our  inference  rule  hence   encapsulates  a  principle  of
``self-consistent minimality'' that we believe constitutes a
natural  way  to  reason   in  the  presence  of  incomplete
information.   We show  that the  \textit{only} systems  for
which DDI  always leads  to a  unique range  for any  set of
data,   among   all   generalized   probabilistic   theories
\cite{PhysRevA.75.032110,                popescu1994quantum,
  PhysRevA.75.032304,   PhysRevA.71.022101,  Popescu:2014aa,
  d2010testing,    dariano_chiribella_perinotti_2017},   are
those  with  (hyper)-spherical  state  space,  such  as  the
classical and the  quantum bit.  This can  be interpreted as
an ``epistemic reconstruction'' of such systems, regarded as
epistemic  hypotheses  onto  which to  base  our  reasoning,
rather   than   actual    entities   to   be   operationally
characterized    \cite{hardy2001quantum,   dakic2009quantum,
  masanes2011derivation, chiribella2011informational}.

The second problem consists of understanding to which extent
the correspondence  between a measurement and  its range can
be inverted,  that is, to  what extent a measurement  can be
characterized if only its range  is given.  In this respect,
in   what  follows,   we   show   that  the   correspondence
measurement-range is  invertible, but only up  to the action
of a symmetry transformation leaving  the state space of the
system  invariant.  This  is something  to be  expected when
directly working in the  space of outcome distributions, and
we consider this to be  a feature, rather than a limitation,
of DDI.

The third  problem is to understand  how an experimentalist,
in  complete  control  of   their  laboratory,  can  produce
experimental data, which are rich enough to reconstruct, via
DDI, the  ``correct'' range  of a  measurement. That  is, we
want to understand whether, in  order to recover the correct
range by DDI, an infinite set of states needs to be prepared
and  sent through  the measurement  apparatus, or  whether a
finite set  of states,  and possibly the  same ones  for any
measurements, suffice  (we remark  that, in contrast  to the
data analysis  of quantum  measurement tomography,  DDI does
not require the knowledge of  such states).  This problem is
analogous to the problem  in quantum tomography to construct
a  standard  apparatus  that  work  whatever  it  is  to  be
reconstructed.   As the  problem  in  quantum tomography  is
solved by informationally  complete apparatus, the analogous
problem  in DDI  reconstruction is  solved by  what we  call
\textit{observationally  complete}   (OC)  apparatus.   More
precisely, OC sets  of states are sets  whose image contains
the  same statistical  information as  the entire  range. Of
course,  since DDI  does not  rely on  the knowledge  of the
states, it  is not possible  for DDI to certify  whether the
states fed into the measurement were OC.

We show  that the property of  observational completeness is
strictly  stronger  than  informational  completeness,  thus
constituting a new  ``Bureau of Standards'' in  terms of DDI
reconstruction.  To this aim we  show that, for systems with
(hyper)-spherical  state space  such  as  the classical  and
quantum bits,  the only observationally complete  simplex is
the  regular   simplex,  that  is,   the  \textit{symmetric}
informationally                complete                (SIC)
one~\cite{doi:10.1063/1.1737053,        zauner2011grundzuge,
  fuchs2017sic,RevModPhys.85.1693,        mermin2014physics,
  doi:10.1119/1.4874855}.     Data-driven   inference    and
reconstruction, hence,  naturally lead to the  notion of SIC
apparatus \textit{  by looking  only at  the set  of outcome
  distributions}, thus providing  a completely new viewpoint
on the  discussion about  SIC apparatus and  their ``natural
occurrence'' in quantum theory.

The structure of the paper  follows the above discussion. In
the first section, we introduce data-driven inference as the
inference of the minimal  range consistent with the observed
distributions, and we show that the inferred range is unique
for any set  of outcome distributions only  for systems with
(hyper)-spherical state space.  We also prove that the range
of a measurement  identifies such a measurement  up to gauge
symmetries.   In  the  second   section,  we  introduce  the
property  of observational  completeness  and  show that  it
represents a strictly  stronger condition than informational
completeness.   For  systems  with  (hyper)-spherical  state
space, we show that the minimal observationally complete set
of  states happens  to  be  SIC. In  the  third section,  we
implement  a  protocol for  the  machine  learning of  qubit
measurements    based   on    data-driven   inference    and
reconstruction, and we test it  on data generated by the IBM
Q Experience quantum computer.

\section{Data-driven inference}

Let us  consider an  experimental setup involving  two boxes
equipped with $m$ buttons and $n$ light bulbs, respectively.
This situation is depicted in Fig.~\ref{fig:setup}.
\begin{figure}[tb]
  \vspace{3mm}
  \includegraphics[width=.8\columnwidth]{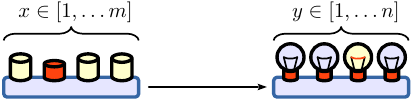}
  \caption{The   experimental   setup    consists   of   two
    uncharacterized boxes equipped with  $m$ buttons and $n$
    light bulbs,  respectively.  At each run,  Alice presses
    button  $x$  and  observes outcome  $y$,  thus  recording
    vectors  $\{ \mathbf{p}_x  \in  \R^n  \}$, whose  $y$-th
    entry is the frequency of outcome $y$ given input $x$.}
  \label{fig:setup}
\end{figure}
At each  run of the  experiment, a theoretician,  say Alice,
presses button $x$ and observes outcome $y$.  She records the
vectors $\{ \mathbf{p}_x \in \R^n \}$, whose $y$-th entry is
the frequency of outcome $y$ given input $x$.

We  address the  problem of  inferring all  the measurements
$\hat{M}$    that     are    \textit{self-consistent}    and
\textit{minimal} for  observed frequencies  $\{ \mathbf{p}_x
\}$, in  an i.i.d.  hypothesis for  $\hat{M}$.  To formalize
this  idea,  notice  that any  $n$-outcome  measurement  $M$
induces  a  linear  transformation   from  the  state  space
$\bb{S}$ (the set of all  states available to the system) to
the  space of  outcome distributions  $\R^n$.  The  range of
such  a transformation,  denoted by  $M(\bb{S})$, represents
the  distributions compatible  with measurement  $M$. Hence,
given some  prior information about a  state space $\bb{S}$,
the inferred  linear transformations $\hat{M}$  minimize the
volume\footnote{Notice  that  the  choice of  volume  as  an
  objective    function    is    independent    of    linear
  transformations, since  for any body the  volume under any
  linear  transformation  changes  by  a  factor  that  only
  depends on the linear transformation,  and not on the body
  itself.   Non-linear transformations  are not  allowed, as
  they  do   not  preserve  the  linear   structure  of  the
  underlying  space   (state/effect  space   or  probability
  space).}   of  their  ranges $\hat{M}(\bb{S})$  under  the
self-consistency  constraints $\hat{M}(\bb{S})  \supseteq \{
\mathbf{p}_x  \}$.\footnote{Notice   that  not   any  linear
  transformation  corresponds to  a legitimate  measurement.
  In the case  in which none of the inferred  $\hat{M}$ is a
  legitimate measurement, the inference  fails, in the sense
  that Alice declares that  either the data are insufficient
  or  that the  assumption of  the state  space $\bb{S}$  is
  inconsistent.}  This naturally identifies two steps in the
inference of  $\hat{M}$: i)  the inference of  the (possibly
not  unique) minimum-volume  range $\hat{\mc{R}}$  such that
$\hat{\mc{R}}  \supseteq \{  \mathbf{p}_x \}$,  and ii)  the
characterization of all the linear transformations $\hat{M}$
with given range $M (\bb{S}) = \hat{\mc{R}}$.

Let  us  start  addressing  the first  step,  that  is,  the
inference of the (possibly  not unique) minimum-volume range
$\hat{\mc{R}}$   such   that  $\hat{\mc{R}}   \supseteq   \{
\mathbf{p}_x \}$.  Since such  an inference is solely driven
by the data, we  call it \textit{data-driven inference} (see
also~\cite{PhysRevLett.118.250501,dall2018device}):

\begin{dfn}[Data-driven inference]
  \label{def:inference}
  For any set  $\{\mathbf{p}_x \}$, we denote by  $\ddi ( \{
  \mathbf{p}_x  \}  |   \bb{S}  )$  the  \textit{data-driven
    inference} map
  \begin{align}
    \label{eq:inference}
    \ddi \left(  \{ \mathbf{p}_x  \} \big| \bb{S}  \right) =
    \argmin_{\mc{R} \supseteq \left\{ \mathbf{p}_x \right\}}
    \vol \left( \mc{R} \right),
  \end{align}
  where $\vol ( \mathcal{R}  )$ denotes the Euclidean volume
  of  $\mathcal{R}$ and  the  minimization  is over  subsets
  $\mc{R}  \subseteq \mathbb{R}^n$  corresponding to  linear
  transformations of  the state  space $\bb{S}$ that  lie on
  the affine subspace generated by $\{ \mathbf{p}_x \}$.
\end{dfn}
If the  prior information  does not  specify a  single state
space $\bb{S}$,  the minimum in  Eq.~\eqref{eq:inference} is
meant to run also over all such sets.

An     intuitive      geometrical     interpretation     for
Definition~\ref{def:inference}   is   illustrated   by   the
following two examples.

First,  let $\bb{S}$  be  a  (hyper)-sphere $\Sigma$.   This
scenario  encompasses the  cases  of  classical and  quantum
bits,  where the  set $\bb{S}$  is a  one-dimensional and  a
three-dimensional sphere,  respectively.  In this  case, the
optimization  in  Eq.~\eqref{eq:inference}  is  over  affine
transformations  of  a  sphere,  that  is,  ellipsoids.   It
follows~\cite{John2014,boyd2004convex}   that    the   range
returned  by map  $\ddi$ is  unique  for any  input set  $\{
\mathbf{p}_x \}$, and one has that
\begin{align}
  \label{eq:ellipsoid}
    \ddi \left(  \left\{ \mathbf{p}_x  \right\} \;  \big| \;
    \Sigma   \right)   =   \left\{  \mvee   \left(   \left\{
    \mathbf{p}_x \right\} \right) \right\},
\end{align}
where   $\mvee$   denotes   the   minimum   volume-enclosing
ellipsoid, which  is known to  be a convex problem,  and for
which      efficient      computing      algorithms      are
available~\cite{todd2016minimum}.     This   situation    is
illustrated in Fig.~\ref{fig:inference}, left-hand side.

\begin{figure}[tb]
  \includegraphics[width=.49\columnwidth]{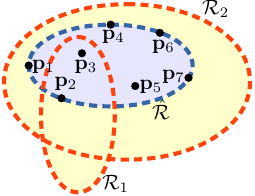}
  \includegraphics[width=.49\columnwidth]{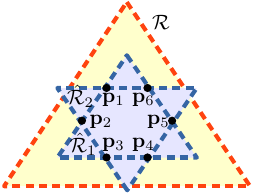}
  \caption{The figure provides a schematic representation of
    the space of distributions when $n = 3$ (three outcomes,
    and hence  the probability simplex  is two-dimensional).
    \textbf{Left}:  when  the  state  space  $\bb{S}$  is  a
    sphere,   the    data-driven   inference    defined   by
    Eq.~\eqref{eq:inference}     returns     the     minimum
    volume-enclosing ellipsoid  $\hat{\mc{R}}$, which always
    exists unique,  for the given set  $\{ \mathbf{p}_x \}$.
    Range $\mc{R}_1$ is not consistent with $\{ \mathbf{p}_x
    \}$,  while  $\mc{R}_2$,  although  consistent,  is  not
    minimum  volume.  \textbf{Right}:  when the  state space
    $\bb{S}$ is a simplex, the data-driven inference defined
    by Eq.~\eqref{eq:inference}  returns the set  of minimum
    volume-enclosing simplices  (as these are  generally not
    unique)  $\hat{\mc{R}}_1$ and  $\hat{\mc{R}}_2$ for  the
    given set  $\{ \mathbf{p}_x \}$.  Range  $\mc{R}$ is not
    minimum volume.}
  \label{fig:inference}
\end{figure}

Let us consider now, as a  second example, the case in which
the state space  $\bb{S}$ is a regular  simplex $\Delta$, as
it is  the case  for classical systems.   In this  case, the
optimization  in  Eq.~\eqref{eq:inference}  is  over  affine
transformations of  the simplex $\Delta$, which  turn out to
be  simplices  themselves.   It  is not  difficult  to  find
configurations  of the  set  $\{\mathbf{p}_x\}$ of  observed
distributions,  such  that the  range  returned  by the  map
$\ddi$  is not  unique.   This situation  is illustrated  in
Fig.~\ref{fig:inference}, right-hand side.

In light of the previous  examples, it is a natural question
to ask for which state spaces $\bb{S}$ the range returned by
the  data-driven inference  $\ddi$  is  always unique.   The
following Theorem,  proved in  the appendix, answers  such a
question.
\begin{thm}
  \label{thm:inference}
  The range returned by the data-driven inference $\ddi ( \{
  \mathbf{p}_x  \}  |  \bb{S}  )$  is  unique  for  any  $\{
  \mathbf{p}_x \}$, if and only  if the state space $\bb{S}$
  is a (hyper)-sphere.
\end{thm}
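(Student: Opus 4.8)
The plan is to prove both directions by reducing the uniqueness question to a statement about minimum-volume enclosing bodies of a fixed convex shape. First I would set up the framework carefully: fixing a state space $\bb{S}$ (a compact convex body of dimension $d$), the candidate ranges $\mc{R}$ for a given data set $\{ \mathbf{p}_x \}$ are exactly the affine images of $\bb{S}$ lying in the affine hull $A$ of $\{ \mathbf{p}_x \}$ and containing $\{ \mathbf{p}_x \}$. After a generic affine change of coordinates we may assume $A = \R^d$, so that the feasible ranges are the nondegenerate affine images $T(\bb{S})$, $T \in \mathrm{GL}(d) \ltimes \R^d$, with $\{ \mathbf{p}_x \} \subseteq T(\bb{S})$. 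Because any two affine images of $\bb{S}$ differ by an element of the affine group, and that group acts transitively on ``copies of $\bb{S}$,'' uniqueness of $\ddi$ for all data sets is equivalent to the following purely geometric property of $\bb{S}$: for every finite (equivalently, every compact) set $P$, the minimum-volume member of $\{ T(\bb{S}) : P \subseteq T(\bb{S}) \}$ is unique. Call this property $(\star)$.

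For the ``if'' direction, assume $\bb{S}$ is a (hyper)-sphere $\Sigma$. Then $T(\Sigma)$ ranges over all ellipsoids, and $(\star)$ is exactly the classical theorem of Löwner--John: every bounded set with nonempty interior in $\R^d$ is contained in a \emph{unique} ellipsoid of minimal volume. This is already invoked in the spherical Example via \cite{John2014,boyd2004convex}, so this direction is immediate once the reduction above is in place; the only care needed is handling data sets $\{ \mathbf{p}_x \}$ that are affinely degenerate, which is precisely what the ``lie on the affine subspace generated by $\{ \mathbf{p}_x \}$'' clause in Definition~\ref{def:inference} takes care of — one restricts to that affine hull and applies Löwner--John there.

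For the ``only if'' direction I would argue the contrapositive: if $\bb{S}$ is not an ellipsoid (not a hyper-sphere up to affine maps), then $(\star)$ fails, i.e. there is a data set with two distinct minimum-volume ranges. The natural strategy is to exploit a \emph{symmetry} argument. The minimum-volume enclosing body of $P$ must be left invariant by every volume-preserving affine symmetry of the configuration $(P, \bb{S})$; if we can choose $P$ so that its symmetry group, acting on the space of candidate bodies, has no fixed point, uniqueness is violated. Concretely, take $P$ to be the vertex set (or a highly symmetric finite subset) of $\bb{S}$ itself, scaled so that $\bb{S}$ is a feasible range; if $\bb{S}$ is not an ellipsoid there will be an affine symmetry $g$ of $P$ with $g(\bb{S}) \neq \bb{S}$ but $\vol g(\bb{S}) = \vol \bb{S}$, and by minimality and the symmetry of $P$ one shows both $\bb{S}$ and $g(\bb{S})$ (or suitable perturbations supported inside both) are minimizers — contradiction. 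Alternatively, and perhaps more robustly, one uses the simplex Example as a template: for any non-ellipsoidal $\bb{S}$ one can place the points $\{ \mathbf{p}_x \}$ near the ``flat part'' of the boundary (a face, or a region of zero Gaussian curvature, which a non-ellipsoid must have by a rigidity/convexity argument) so that the enclosing body can be rotated or reflected about that flat part without changing its volume or losing containment, yielding at least two minimizers.

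The main obstacle is the ``only if'' direction — specifically, turning ``$\bb{S}$ is not an ellipsoid'' into an explicit bad data set in a way that is uniform over all convex bodies $\bb{S}$ (smooth or polytopal, in every dimension $d \geq 1$). The characterization of ellipsoids as the only convex bodies whose minimal enclosing body is always unique is essentially a converse to Löwner--John and is not elementary; I expect the proof to hinge on a known rigidity statement — e.g. that a convex body all of whose minimal-volume circumscribed affine copies of itself are unique must have an affine symmetry group large enough (transitive on the boundary, or containing $\mathrm{SO}(d)$ up to conjugacy) to force it to be an ellipsoid. Making that step rigorous, rather than merely exhibiting counterexamples for the two special shapes (sphere, simplex) already discussed, is where the real work lies; the reduction to $(\star)$ and the ``if'' direction should be comparatively routine.
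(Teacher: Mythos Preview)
Your reduction to property $(\star)$ and the ``if'' direction via L\"owner--John are correct and match the paper exactly.

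The gap is in the ``only if'' direction. Your symmetry heuristic is the right idea, but both concrete instantiations you propose fail in general: a generic non-ellipsoidal convex body need not have any flat boundary region, and taking $P$ to be ``the vertex set (or a highly symmetric finite subset) of $\bb{S}$ itself'' produces a symmetry group that by construction \emph{preserves} $\bb{S}$, so you will never find the affine $g$ with $g(P)=P$ but $g(\bb{S})\neq\bb{S}$ that your argument needs. You correctly flag this step as the main obstacle, but the obstacle dissolves with one clean choice of $P$ that you did not try.

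The paper's trick is to take the data set $\mc{X}$ to be a (hyper)-sphere. Its orthogonal symmetry group is all of $O(d)$. If $\ddi(\mc{X}\,|\,\bb{S})$ were a singleton, that unique minimizing range would have to be fixed by every element of $O(d)$ (this is exactly your ``the minimizer must be invariant under every symmetry of $P$'' step, and is the paper's implication $1\Rightarrow 2$); but the only $O(d)$-invariant convex bodies are balls. Hence the family $\{T(\bb{S}):T\text{ affine}\}$ contains a ball, which forces $\bb{S}$ to be an ellipsoid. Contrapositively: if $\bb{S}$ is not an ellipsoid, then for the spherical data set no affine copy of $\bb{S}$ is $O(d)$-invariant, so the $O(d)$-orbit of any minimizer already furnishes a continuum of distinct minimizers. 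No boundary case analysis and no external rigidity theorem are needed; the entire ``only if'' direction is this single choice of $\mc{X}$ together with the elementary fact that balls are the unique $O(d)$-invariant convex bodies.
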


This  result can  be lifted  to  the level  of a  principle,
singling out spherical state  spaces $\bb{S}$, such as those
of the  classical and quantum  bits, as those for  which the
map $\ddi$  always returns  a unique range.   This principle
rules out theories with more exotic elementary systems, such
as PR-boxes\cite{popescu1994quantum} for  which inference is
not  always   unique.   In  this   case,  we  speak   of  an
\textit{epistemic principle},  that is, a constraint  on the
state space seen  as the hypothesis used by  the observer as
the base of their inference.

Let us now move on to  the second step mentioned above, that
is, the  characterization of all the  linear transformations
$M$ with  range $M(\bb{S})$  equal to  a given  inferred one
$\hat{\mc{R}}$.   Notice first  that any  transformation $U$
that leaves the state space $\bb{S}$ invariant, that is such
that $U  (\bb{S} )=  \bb{S}$, does not  affect the  range $M
(\bb{S})$, that is $M( U(  \bb{S})) = M (\bb{S})$.  We refer
to any  such a transformation as  a \textit{gauge symmetry}.
The  following  Theorem  shows  that accuracy  up  to  gauge
symmetries   is  indeed   the   optimal   accuracy  in   the
characterization  of  any  informationally  complete  (i.e.,
invertible) $M$, given  its range.  (For non-informationally
complete $M$, the  statement, although conceptually similar,
becomes  technically  more  involved,  so  we  postpone  the
general statement and its proof to the appendix.)

\begin{thm}
  \label{thm:range}
  For any given state  space $\bb{S}$, the range $M(\bb{S})$
  of any  informationally complete $M$ identifies  $M$ up to
  gauge symmetries.
\end{thm}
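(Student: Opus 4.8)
The plan is to establish the two directions of ``identifies up to gauge symmetries'' separately; the substantive one is that any two informationally complete measurements on $\bb{S}$ with the same range differ by a gauge symmetry. The converse is immediate and already noted in the text: if $M' = M \circ U$ with $U(\bb{S}) = \bb{S}$, then $M'(\bb{S}) = M(U(\bb{S})) = M(\bb{S})$, so equal gauge orbits give equal ranges.

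For the forward direction, suppose $M(\bb{S}) = M'(\bb{S}) =: \mc{R}$ with $M$ informationally complete. Viewing $M$ as an affine map (ranges, as recalled in the text, coincide with affine images of $\bb{S}$), informational completeness of $M$ means precisely that $\sigma \mapsto M(\sigma)$ is injective on $\bb{S}$, equivalently on $\aff(\bb{S})$, so that $M$ restricts to an affine isomorphism $\aff(\bb{S}) \to \aff(\mc{R})$ with a well-defined inverse $M^{-1}$. First I would note that $M'$ is automatically informationally complete as well: an affine map cannot raise affine dimension, so $\dim\aff(M'(\bb{S})) \le \dim\aff(\bb{S})$, while $M'(\bb{S}) = \mc{R}$ together with the informational completeness of $M$ forces $\dim\aff(\mc{R}) = \dim\aff(\bb{S})$; hence $M'$ is injective on $\aff(\bb{S})$ too.

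I would then set $U := M^{-1} \circ M'$. This is well defined, since $M'$ maps $\aff(\bb{S})$ into $\aff(\mc{R})$, the domain of $M^{-1}$, and it is an affine automorphism of $\aff(\bb{S})$, being a composition of affine isomorphisms. By construction $M \circ U = M \circ M^{-1} \circ M' = M'$, so it only remains to verify that $U$ is a gauge symmetry, i.e.\ $U(\bb{S}) = \bb{S}$. This is immediate: $U(\bb{S}) = M^{-1}(M'(\bb{S})) = M^{-1}(\mc{R}) = M^{-1}(M(\bb{S})) = \bb{S}$, the last step using injectivity of $M$ on $\aff(\bb{S})$. Thus $M' = M \circ U$ with $U$ a gauge symmetry, and since any $M'$ with range $\mc{R}$ arises in this way, $\mc{R}$ pins down $M$ exactly up to gauge.

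The argument is short once the affine viewpoint is adopted; the two points needing genuine care are (i) the equivalence between informational completeness of $M$ and invertibility of its associated affine map---this is where the hypothesis is actually used, and I would justify it from the Born-rule duality between states and effects---and (ii) keeping the linear versus affine descriptions of ranges consistent, so that $M^{-1}$ and the composite $M^{-1}\circ M'$ are manipulated on the correct subspace. The real obstacle, which the text explicitly defers to the Supplemental Material, is the non-informationally-complete case: there $M$ is not invertible, $U$ cannot be defined by naive inversion, and the statement has to be reformulated modulo $\ker M$, with the gauge freedom enlarged by the freedom in how $M$ acts along the kernel directions.
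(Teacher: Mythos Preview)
Your argument is correct and follows essentially the same route as the paper: construct the gauge symmetry as $U = M^{-1}\circ M'$ and verify $U(\bb{S})=\bb{S}$ directly from $M'(\bb{S})=M(\bb{S})$. The only cosmetic difference is that the paper works in the ambient linear space $\R^\ell$ and writes the inversion via the Moore--Penrose pseudoinverse, setting $U:=D^+D'$ inside a more general statement (their Theorem on equivalence classes) that simultaneously covers the non-IC case through the projector $D^+D$; specializing to $D^+D=\openone$ recovers exactly your affine-inverse construction.
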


Although  Theorem~\ref{thm:range}  is  valid for  any  state
space $\bb{S}$, for the sake  of concreteness let us revisit
our  running  example  where $\bb{S}$  is  a  (hyper)-sphere
$\Sigma$, in particular a qubit.

In  the  qubit  case,  the gauge  symmetries  correspond  to
unitary  and  anti-unitary  transformations in  the  Hilbert
space,  hence,  the   range  $M(\bb{S})$  identifies  linear
transformation   $M$   up   to  unitary   and   anti-unitary
transformations.    However,  Theorem~\ref{thm:range}   only
guarantees the existence of  such an identification, without
providing         an          explicit         construction.
Reference~\cite{PhysRevLett.118.250501}  fills  this gap  by
explicitly  deriving  all  the linear  transformations  that
correspond to  any given (hyper)-ellipsoidal range.   Such a
result is provided a new simpler proof in appendix.

An  algorithmic representation  of data-driven  inference is
given in Fig.~\ref{fig:flux}.

\begin{figure}
  \includegraphics{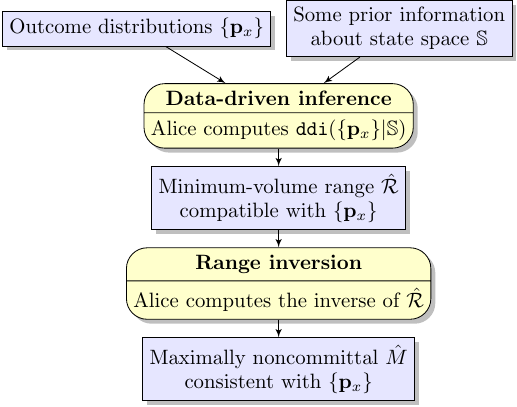}

  \caption{Algorithmic    representation   of    data-driven
    inference.}
  \label{fig:flux}
\end{figure}

\section{Data-driven  reconstruction}

In  the  previous  section  we  introduced  a  principle  of
self-consistent  minimality  to  guide the  inference  of  a
measurement    given    a    set   of    observed    outcome
distributions.  In this  section,  we consider  the case  in
which    the   boxes    with   buttons    and   lights    in
Fig.~\ref{fig:setup}  describe  a physical  state-preparator
$\mc{S}$ and a physical measurement $M$, respectively.  This
situation is illustrated in Fig.~\ref{fig:setup2}.
\begin{figure}[tb]
  \vspace{3mm}
  \includegraphics{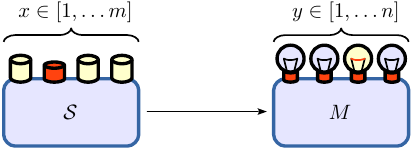}
  \caption{The    same    experimental     setup    as    in
    Fig.~\ref{fig:setup},  although  this   time  the  state
    preparator $\mc{S}$  is built  by Bob,  with the  aim of
    enabling  Alice  to  correctly infer  measurement  range
    $M(\bb{S})$,   that   is,   for   the   inferred   range
    $\hat{\mc{R}}$, one has $\hat{\mc{R}}=M(\bb{S})$.}
  \label{fig:setup2}
\end{figure}

The    state-preparator   $\mc{S}$    is    built   by    an
experimentalist, say Bob, with the  aim of enabling Alice to
correctly  infer the  measurement range  $M(\bb{S})$ through
DDI, that  is, the inferred range  $\hat{\mc{R}}$ satisfying
$\hat{\mc{R}}=M(\bb{S})$,  in  the   limit  in  which  Alice
presses each button infinitely many times.  This task shares
similarities with conventional  measurement tomography, with
the major difference that, in  the latter, full knowledge of
the state-preparator  $\mc{S}$ is  pivotal for  Alice's data
analysis, whereas DDI solely depends on the observed outcome
distributions and the knowledge of the state space $\bb{S}$.

Since it is sufficient to show that Bob is able to construct
one such a state-preparator, we  can assume, without loss of
generality, that  each button of the  state preparator emits
always  the same  state at  each press,  and that  different
buttons are  associated with  different states.   Hence, the
state-preparator   $\mathcal{S}$   can   be   mathematically
described as a set of states.

The probabilities  $\{ \mathbf{p}_x  \}$ Alice  observes are
the  image  of   the  states  in  $\mc{S}$,   that  is,  $\{
\mathbf{p}_x\} = M(\mc{S})$.  Correct inference imposes then
that  $M(\mc{S})$ contains  all the  statistical information
that is available in the measurement range $M (\bb{S})$, and
that  such  information  can  be  extracted  by  data-driven
inference.  We  call \textit{observationally  complete} (OC)
any state  preparator that allows for  the correct inference
of measurement $M$.

\begin{dfn}[Observational completeness]
  \label{def:completeness}
  A    set   of    states   $\mc{S}$    is   said    to   be
  \textit{observationally  complete}   for  measurement  $M$
  whenever
  \begin{align}
    \label{eq:completeness}
    \ddi  \left(  M(\mc{S})\big|   \bb{S}
    \right)= \left\{ M (\bb{S}) \right\}\;.
  \end{align}
\end{dfn}

If the set  is not OC, the set of  inferred measurements may
not contain  the correct measurement, but  measurements with
ranges  smaller  than  the   correct  measurement,  or  even
measurements inequivalent (modulo gauge symmetries) to it.

Notice  that observational  completeness (OC-ness)  plays an
analogous   role    for   data-driven    reconstruction   as
informational completeness (IC-ness)  plays for conventional
measurement tomography.   However, IC  sets of  states allow
for the  correct tomographic reconstruction  of \textit{any}
measurement, while  OC sets  of states apparently  depend on
the measurement to be inferred.

Is this really  the case? It turns out that,  as long as the
measurement is  IC, by  bypassing the  linear transformation
$M$ in  Eq.~\eqref{eq:completeness} one obtains  a condition
equivalent to Eq.~\eqref{eq:completeness},  as stated in the
following theorem:
\begin{thm}
  \label{thm:completeness}
  A  set  $\mc{S}$ of  states  is  OC  complete for  any  IC
  measurement, if and only if
  \begin{align}
    \label{eq:completeness2}
    \ddi \left( \mc{S} \big|  \bb{S}\right) = \left\{ \bb{S}
    \right\}.
  \end{align}
\end{thm}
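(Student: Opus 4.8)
The plan is to use the invertibility of an informationally complete measurement to transport the optimization problem defining $\ddi(M(\mc{S})|\bb{S})$ onto the one defining $\ddi(\mc{S}|\bb{S})$. Write $V:=\aff(\mc{S})$ and $W:=\aff(M(\mc{S}))$; since $M$ is affine, $W=M(V)$. Because $M$ is informationally complete, the map it induces on the state space is injective, hence its restriction $M|_{V}$ is an affine isomorphism from $V$ onto $W$. The key step is to show that $\mc{R}\mapsto M(\mc{R})$ is a bijection between the feasible set of the minimization defining $\ddi(\mc{S}|\bb{S})$ --- the affine images of $\bb{S}$ that lie on $V$ and contain $\mc{S}$ --- and the feasible set of the minimization defining $\ddi(M(\mc{S})|\bb{S})$ --- the affine images of $\bb{S}$ that lie on $W$ and contain $M(\mc{S})$. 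One direction is immediate since $M$ is affine and injective; the other uses that $(M|_{V})^{-1}$ is affine, so that any feasible $\mc{R}'$ for $\ddi(M(\mc{S})|\bb{S})$ pulls back to a feasible $\mc{R}=(M|_{V})^{-1}(\mc{R}')$ for $\ddi(\mc{S}|\bb{S})$. Moreover, by the change-of-variables property of volume under linear maps already noted after Eq.~\eqref{eq:inference}, one has $\vol M(\mc{R})=c\,\vol\mc{R}$ for a constant $c>0$ depending only on $M|_{V}$, so the bijection preserves the $\argmin$. Hence $\ddi(M(\mc{S})|\bb{S})=\{\,M(\mc{R}):\mc{R}\in\ddi(\mc{S}|\bb{S})\,\}$.

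From this identity the equivalence follows for each fixed informationally complete $M$: the right-hand side of the identity equals $\{M(\bb{S})\}$ if and only if every element of $\ddi(\mc{S}|\bb{S})$ is mapped by $M$ to $M(\bb{S})$, and since $M$ is injective on $\aff(\bb{S})$ (which contains $V$, hence every element of $\ddi(\mc{S}|\bb{S})$, as well as $\bb{S}$), this happens if and only if $\ddi(\mc{S}|\bb{S})=\{\bb{S}\}$. Because the latter condition does not depend on $M$, the theorem follows: its ``if'' part is the implication just established applied to an arbitrary informationally complete $M$, and its ``only if'' part follows by specializing to any single informationally complete measurement, which exists for the systems of interest (classical, qubit, finite-dimensional quantum).

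I expect the main obstacle to be the bijection step, and within it the affine-span bookkeeping: one must verify that the constraint ``lie on the affine subspace generated by the data'' is precisely what guarantees that $M|_{V}$ is an isomorphism onto $W=M(V)$, so that feasibility is preserved in both directions, and that $\aff(M(\mc{S}))=M(\aff(\mc{S}))$ because $M$ is affine. Once this is settled, the volume-scaling argument and the singleton/injectivity argument are routine; a minor additional point is to record that the ``only if'' clause is not vacuous, since informationally complete measurements do exist for the relevant state spaces.
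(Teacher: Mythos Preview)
Your argument is correct and is essentially the same as the paper's: the paper isolates your bijection-plus-volume-scaling step as a separate commutativity lemma, $D(\ddi(\mc{A}\mid\bb{A}))=\ddi(D(\mc{A})\mid\bb{A})$ whenever $\mc{A}\subseteq\supp D$, proved exactly by exhibiting a bijection between the two feasible sets that rescales volume by a factor depending only on $D$, and then obtains the theorem as the special case $\supp M=\R^\ell$ (so $\Pi=\openone$). Your direct proof for the informationally complete case reproduces this argument without the intermediate factorization; the only thing the paper's route buys is the slightly more general statement that observational completeness depends on $M$ only through $\supp M$.
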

Notice that  in Eq.~\eqref{eq:completeness2} the  map $\ddi$
is  applied to  a set  of  states, as  opposed to  a set  of
probability distributions as it was the case so far.

Hence, any  set $\mc{S}$ of  states that  is OC for  some IC
measurement,   is  also   OC  for   \textit{any  other}   IC
measurement.      Moreover,     Eq.~\eqref{eq:completeness2}
provides a  characterization of any  such a set  $\mc{S}$ in
closed-form,  namely,  in  a  form  which  only  depends  on
$\mc{S}$        alone,        in        contrast        with
Definition~\ref{def:completeness} that also  depends on $M$.
Such a characterization can be  used to readily check if any
given set $\mc{S}$ is OC for any IC measurement.

More generally,  even if $M$  is not an IC  measurement, one
can      write      a      condition      equivalent      to
Eq.~\eqref{eq:completeness}  (and conceptually  analogous to
Eq.~\eqref{eq:completeness2},    just    technically    more
involved)  that depends  on  $M$ only  through its  support.
Hence, any set $\mc{S}$ of states that is OC for $M$ is also
OC for any other measurement with the same support. In other
words, it is  \textit{universally} OC on such  a support.  A
fully general  version of  Theorem~\ref{thm:completeness} is
provided in the appendix.

Notice that, while OC sets  of states are universal within a
given  subspace, IC  sets of  states are  universal for  any
subspace and \textit{all}  of its subspaces. In  fact, it is
easy  to see  that the  only set  of states  that is  OC for
\textit{any} measurement  $M$ is  trivially the  state space
$\bb{S}$.

\begin{figure}[htb]
  \includegraphics{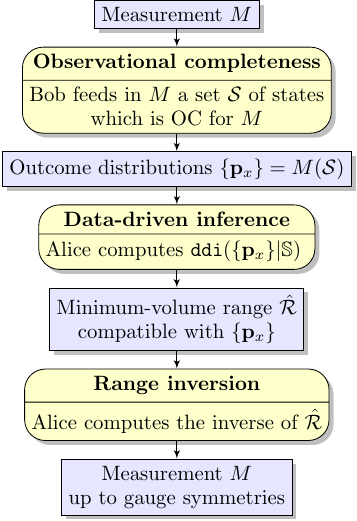}
  \caption{Algorithmic    representation   of    data-driven
    reconstruction.}
  \label{fig:flux2}
\end{figure}

Although  Theorem~\ref{thm:completeness}  is valid  for  any
state space  $\bb{S}$, for the  sake of concreteness  let us
revisit   our   running   example  where   $\bb{S}$   is   a
(hyper)-sphere.

In         light        of         Eq.~\eqref{eq:ellipsoid},
Theorem~\ref{thm:completeness} states that a set $\mc{S}$ of
states is OC for IC measurements  if and only if its minimum
volume-enclosing ellipsoid coincides with the (hyper)-sphere
$\Sigma$, that is
\begin{align}
  \mvee(\mc{S}) = \Sigma.
\end{align}
In     the     appendix,     as     a     consequence     of
Ref.~\cite{vandev1992minimal}, we show that such a condition
is  satisfied  when $\mc{S}$  is  a  regular simplex.   This
situation  is illustrated  in Figure~\ref{fig:completeness},
left-hand side.  Moreover, we show that, as a consequence of
Ref.~\cite{Vince2008},  also the  converse is  true, namely,
that such  a condition is  violated whenever $\mc{S}$  is an
irregular   simplex    (see   Figure~\ref{fig:completeness},
right-hand   side).   Hence,   for  simplices,   OC-ness  is
equivalent to symmetric IC-ness  and, therefore, the SIC set
of states is  the minimal (in terms of  cardinality) OC set.
This  provides an  operational  interpretation of  symmetric
informational completeness in terms of data-driven inference
and reconstruction.  It is  tempting to conjecture that this
equivalence  holds  for any  quantum  system,  not just  the
qubit.

\begin{figure}[b]
  \includegraphics[width=.33\columnwidth]{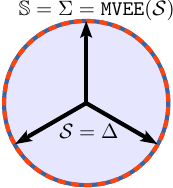}
  \hspace{.1\columnwidth}
  \includegraphics[width=.33\columnwidth]{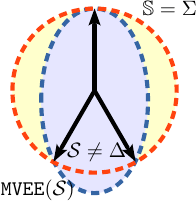}
  \caption{The    figure    provides    a    two-dimensional
    representation  of the  spherical  state space  $\bb{S}$
    (orange  line)  and   of  the  minimum  volume-enclosing
    ellipsoid (blue line) for the simplicial set $\mc{S}$ of
    states  (black vectors).   \textbf{Left}: when  $\mc{S}$
    coincides  with   the  regular  simplex   $\Delta$,  its
    minimum-volume  enclosing ellipsoid  coincides with  the
    sphere        $\Sigma$,       hence,        due       to
    Theorem~\ref{thm:completeness},        $\mc{S}$       is
    observationally complete.  \textbf{Right}: when $\mc{S}$
    does not coincide with the regular simplex $\Delta$, its
    minimum  volume-enclosing  ellipsoid  does  \textit{not}
    coincide  with  the  sphere   $\Sigma$,  hence,  due  to
    Theorem~\ref{thm:completeness},    $\mc{S}$    is    not
    observationally complete.}
  \label{fig:completeness}
\end{figure}

Although it is  easy to see that any set  of states which is
OC on a subspace, is also  IC on that subspace, the previous
example shows that the vice-versa  is not true.  Indeed, any
simplex,  whether regular  or not,  is trivially  IC on  its
support.  Hence,  OC-ness is  a strictly  stronger condition
than IC-ness.  In this sense, OC-ness defines a new ``Bureau
of  Standards''  in  terms   of  data-driven  inference  and
reconstruction.

\section{Machine larning of qubit measurements}

In  this Section,  we  first provide  an  algorithm for  the
data-driven  inference of  measurements of  any system  with
(hyper)-spherical state space,  and then we apply  it to the
data-driven   reconstruction   of    a   qubit   measurement
implemented with the IBM quantum computer.

Let us begin by discussing the algorithm for the data-driven
inference.   As   shown  by   Eq.~\eqref{eq:ellipsoid},  for
(hyper)-spherical   state  space   $\bb{S}  =   \Sigma$  the
data-driven  inference algorithm  $\ddi$ can  be written  in
terms  of a  minimum volume  enclosing ellipsoind  algorithm
$\mvee$. Efficient algorithms for the computation of $\mvee$
(quadratic  in the  dimension and  linear in  the number  of
points) can  readily be  found~\cite{todd2016minimum, TY07},
but in general assume that the affine space generated by the
input  set coincides  with  the full  linear space.   Hence,
given  a  set  $\{  \mathbf{p}_x \}$,  to  compute  $\ddi(\{
\mathbf{p}_x \} | \Sigma)$ one proceeds as follows:
\begin{enumerate}
\item Find an isometry $V$ such  that $V^T V = \openone$ and
  $V  V^T$   is  the   projector  on  the   linear  subspace
  homogeneous  to  the  affine  subspace  generated  by  $\{
  \mathbf{p}_x \}$.
  \item Compute  $\mvee( V^T \{ \mathbf{p}_x  - \mathbf{p}_0
    \})$,   where  by   construction  the   affine  subspace
    generated by $V^T \{  \mathbf{p}_x - \mathbf{p}_0 \}$ is
    now full dimensional, thus obtaining an ellipsoid in the
    form   $(\tilde{\mathbf{p}}    -   \tilde{\mathbf{t}})^T
    \tilde{Q}^{-1} (\tilde{\mathbf{p}} - \tilde{\mathbf{t}})
    \le 1$.
  \item The output of $\ddi(\{ \mathbf{p}_x \} | \Sigma)$ is
    the singleton given by the ellipsoid
    \begin{align*}
    \begin{cases}
      \left( \openone  - Q  Q^+ \right) \left(  \mathbf{p} -
      \mathbf{t}   \right)  =   0,\\  \left(   \mathbf{p}  -
      \mathbf{t}   \right)^T   Q^+   \left(   \mathbf{p}   -
      \mathbf{t} \right) \le 1,
    \end{cases}
    \end{align*}
    where  $Q  := V  \tilde{Q}  V^T$  and $\mathbf{t}  :=  V
    \tilde{\mathbf{t}} + \mathbf{p}_0$.
\end{enumerate}

We  tested  our  algorithm  on  the  set  $\{  \mathbf{p}_0,
\mathbf{p}_1, \mathbf{p}_2 \}$ as  follows (later on we show
how this  data was  obtained):
\begin{align}
  \label{eq:data}
  \mathbf{p}_0 \simeq
  \begin{bmatrix}
    0.48 \\ 0.05 \\ 0.24 \\ 0.23
  \end{bmatrix}, \quad
  \mathbf{p}_1 \simeq
  \begin{bmatrix}
    0.21 \\ 0.32 \\ 0.42 \\ 0.05
  \end{bmatrix}, \quad
  \mathbf{p}_1 \simeq
  \begin{bmatrix}
    0.17 \\ 0.33 \\ 0.11 \\ 0.39
  \end{bmatrix}
\end{align}
The minimum-volume enclosing ellipsoid for the set $\{ V^T (
\mathbf{p}_x -  \mathbf{p}_0 ) \}$,  as given in  the second
step of our algorithm, is depicted in Fig.~\ref{fig:mvee}.
\begin{figure}[h!]
  \includegraphics[width=\columnwidth]{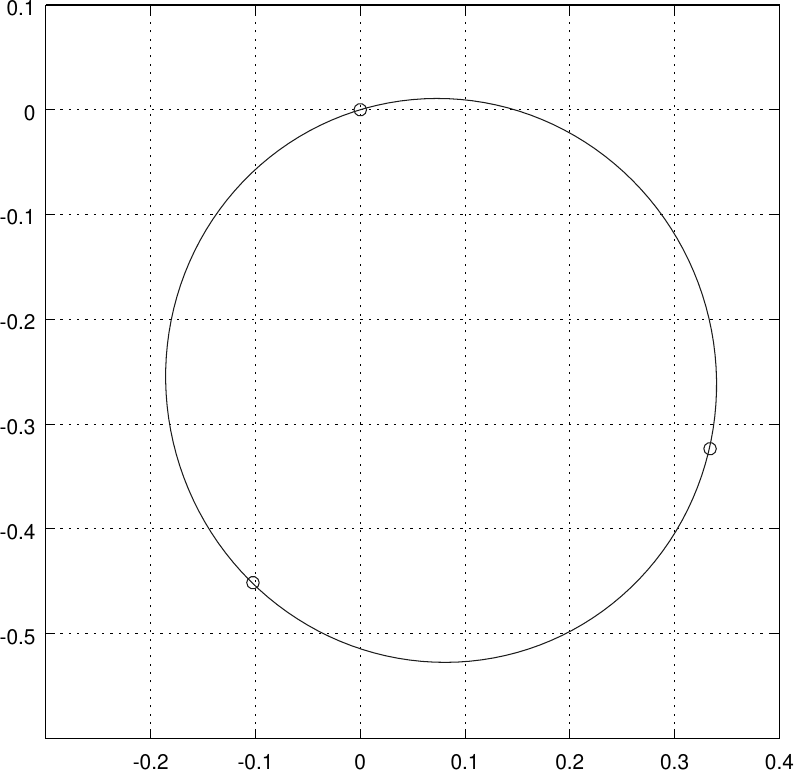}
  \caption{The  figure  depicts  the  three  points  $V^T  (
    \mathbf{p}_x - \mathbf{p}_0 )$ (circles)  for $x = 0, 1,
    2$, where $\bf{p}_x$'s are given by Eq.~\eqref{eq:data},
    and their minimum-volume  enclosing ellipsoid, according
    to the second step of  our algorithm for the data-driven
    inference of qubit measurements.}
  \label{fig:mvee}
\end{figure}
The  correlation   matrix  $Q$   and  the  vector   $t$,  as
reconstructed by the third step  of our algorithm, are given
by
\begin{align}
  \label{eq:corrmatrix}
  Q \simeq 10^{-2} \begin{bmatrix}
    3.8 & -3.6 & 0 & -0.1\\
    -3.6 & 3.4 & 0.2 & 0\\
    0 & 0.2 & 3.3 & -3.4\\
    -0.1 & 0 & -3.4 & 3.6
  \end{bmatrix},
  \quad       \mathbf{t}        \simeq       \begin{bmatrix}
    0.29\\ 0.23\\ 0.26\\ 0.22
  \end{bmatrix}.
\end{align}
As a  consequence of Theorem~\ref{thm:range},  through range
inversion this  allows for the data-driven  inference of the
POVM up to gauge symmetries,  that is, an overall unitary or
anti-unitary  transformation.  In  particular,  as shown  in
Ref.~\cite{PhysRevLett.118.250501}  (an   alternative,  more
compact proof is given in the appendix), any measurement $\{
\pi_y \}$ compatible with  correlation matrix $Q$ and vector
$t$ can be obtained by inverting the following system:
\begin{align*}
  \begin{cases}
    Q_{x,y}  =   \frac12  \Tr[  \pi_x  \pi_y   ]  -  \frac14
    \Tr[\pi_x] \Tr[\pi_y],\\ t_x = \frac12 \Tr[\pi_x].
  \end{cases}
\end{align*}

Let us now discuss  the data-driven reconstruction, that is,
the  way  in  which  the  data  in  Eq.~\eqref{eq:data}  was
generated. Such data  was generated by the  IBM Q Experience
quantum  computer.   The  ideal   circuit  consists  of  the
preparation of a trine set  $\{ \phi_x \}$ of states, which,
as     observed     in    Fig.~\ref{fig:completeness}     is
observationally complete  for any real measurement,  and the
measurement of  two mutually unbiased basis,  as depicted in
Fig.~\ref{fig:circuit}.
\begin{figure}[h!]
  \includegraphics[width=0.9\columnwidth]{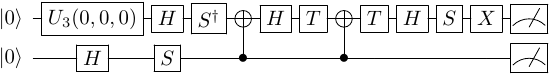}
  \\\phantom{X}\\
  \includegraphics[width=0.9\columnwidth]{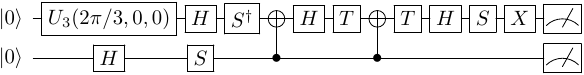}
  \\\phantom{X}\\
  \includegraphics[width=0.9\columnwidth]{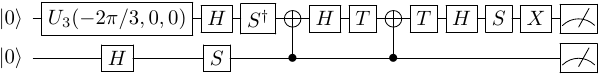}
  \caption{The  orange   gate  (top  left)   represents  the
    preparation of  the trine ensemble (from  top to bottom,
    states  $\phi_0$,  $\phi_1$  and $\phi_2$).   All  other
    gates  represent a  Naimark extension  of a  measurement
    consisting of two mutually unbiased basis.}
  \label{fig:circuit}
\end{figure}
This allows  one to compare  the correlation matrix  $Q$ and
the vector  $\mathbf{t}$ given  in Eq.~\eqref{eq:corrmatrix}
(from which one  can recover the inferred  measurement up to
gauge  symmetries through  range inversion)  with the  ideal
ones, given by:
\begin{align*}
  Q = \frac1{16} \begin{pmatrix} 1 & -1 & 0 & 0\\ -1 & 1 & 0
    & 0\\ 0 & 0 & 1 & -1\\ 0 & 0 & -1 & 1
  \end{pmatrix}, \quad \mathbf{t} = \frac14 \begin{pmatrix} 1\\1\\1\\1 \end{pmatrix}.
\end{align*}
An  analysis  of the  noise  affecting  the IBM  backend  is
however out of the scope of this manuscript.

\section{Conclusion}

In this work  we introduced data-driven inference  as a rule
to output the  maximally noncommittal measurement consistent
with a  set of observed  distributions.  We showed  that the
inference is  possible in principle up  to gauge symmetries,
that  is, symmetries  of the  state space  of the  system at
hand,  and   that  this  accuracy  limit   is  achieved  for
(hyper)-spherical state spaces. Then, we considered the task
of reconstructing  an unknown measurement via  DDI.  To this
aim, we introduced observationally  complete sets of states,
as those enabling  a correct inference \textit{universally},
that  is, for  \textit{any} unknown  measurement on  a given
support.    Deriving  a   closed-form  characterization   of
observational completeness  allowed us  to show  that, while
observational completeness is  a strictly stronger condition
than   informational   completeness,    in   the   case   of
(hyper)-spherical state space OC-ness with minimum number of
states is equivalent to SIC-ness.  In this way, the protocol
of  data-driven inference  provides SIC  sets with  a novel,
entirely operational interpretation.

\section*{Acknowledgements}

M.D.  was  supported by  MEXT Quantum Leap  Flagship Program
(MEXT Q-LEAP) Grant No.  JPMXS0118067285, JSPS KAKENHI Grant
Number JP20K03774,  and the  International Research  Unit of
Quantum Information,  Kyoto University.   F.B.  acknowledges
support  from  MEXT  Quantum  Leap  Flagship  Program  (MEXT
Q-LEAP), Grant  Number JPMXS0120319794,  and from  the Japan
Society for the Promotion  of Science (JSPS) KAKENHI, Grants
Number 19H04066  and Number 20K03746.  This  work was partly
supported by the program  for FRIAS-Nagoya IAR Joint Project
Group.   A.~B.  and  A.~T.  acknowledge  the John  Templeton
Foundation, Project ID 60609 Quantum Causal Structures.

\setcounter{dfn}{0}
\setcounter{thm}{0}
\setcounter{cor}{0}

\appendix

\section*{Appendix}

While, for the sake of  clarity, in the previous sections we
focused  on the  data-driven inference  of measurements,  in
these appendices  we extend  the formalism to  encompass the
case of  data-driven inference  of families of  states, thus
justifying  the word  ``devices''  in  our title.   Finally,
rather  than restricting  the  presentation  to the  quantum
case, here we consider general probabilistic theories.

\section{General framework}

A physical system  can be defined by giving a  set of states
and  a   set  of  effects,  representing   respectively  the
preparations and the observations  of the system.  An effect
$a \in \mathbb{E}$ is a linear map that takes a state $\rho$
as  an input  and outcomes  a probability  $p(a |  \rho )  :=
a(\rho)$.   Since  randomization of  different  experimental
setups is in itself another  valid experiment, it is natural
to endow  the set of  states and the  set of effects  with a
linear  structure and  allow for  any convex  combination of
states and effects. By linear  extension, it is also natural
to introduce  the real vector  spaces generated by  any real
linear combinations  of states and effects.   Restricting to
the finite dimensional case, the  linear space of states and
the linear space of effects are  dual to each other and both
isomorphic  to $\mathbb{R}^{\ell}$  for some  natural number
$\ell$ which  is called  the \textit{linear dimension}  of the
system.

We  assume that  the  physical theory  is \textit{causal}.   A
probabilistic  theory is  causal  if there  exists a  unique
\textit{deterministic   effect}   $e  \in   \mathbb{E}$,   and
\textit{deterministic  states}  are  those  states  such  that
$e(\rho)=1$.  Therefore, states can  always be normalized as
$\overline{\rho}  :=   \rho/e(\rho)$  and  every   state  is
proportional to  a deterministic one.  For  this reason, the
full  set  of states  of  any  causal theory  is  completely
specified by the set of  deterministic states.  We denote by
$\mathbb{S}$ the set of normalized (or deterministic) states
of the theory and by $\mathbb{E}$  the set of effects of the
theory.

By  choosing an  arbitrary basis,  we can  give a  geometric
representation of $\mathbb{S}$ and $\mathbb{E}$ as subset of
$\mathbb{R}^{\ell}$:  $\mathbb{S}$  will  be  a  convex  set
contained  in   a  strictly   affine  $(\ell-1)$-dimensional
subspace, while $\mathbb{E}$ will  be a ``bicon-ish'' shaped
solid (see Fig.~\ref{fig:stateseffects}).
\begin{figure}[h!]
  \includegraphics[width=\columnwidth]{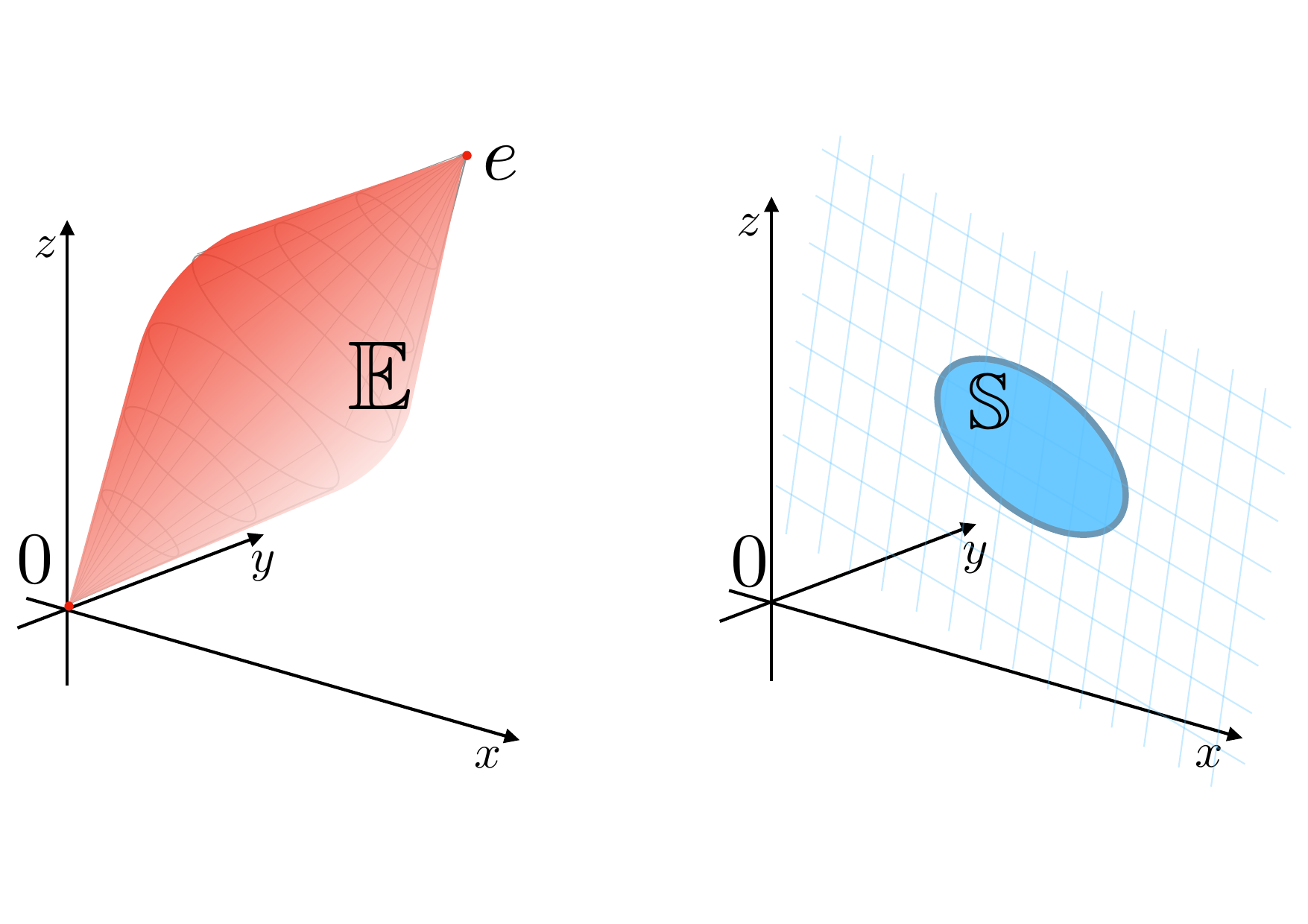}
  \caption{Vectorial   representation   of  the   a   system
    $(\bb{S},\bb{E})$  of linear  dimension $\ell=3$.   Both
    the set  of normalized  states $\bb{S}$  and the  set of
    effects $\bb{E}$ can by represented as convex subsets of
    the vector space $\R^3$.  { \bf Left:} The convex set of
    effects $\bb{E}$  corresponds to a truncated  cone.  The
    truncated  cone contains  the two  points $(0,0,0)$  and
    $(1,1,1)$, which  correspond to  the null effect  and to
    the  deterministic  effect   $e$  respectively.   {  \bf
      Right:} The convex set  of normalized states $\bb{S}$.
    This   set  contains   all  states   $\rho$  such   that
    $e(\rho)=1$, and is then merged  in an affine space with
    dimension  $\ell-1$  (in  this  case  a  two-dimensional
    space).}
  \label{fig:stateseffects}
\end{figure}

\textit{Measurements}    are     a    family     of    effects
$\{a_y\}_{y=1}^{n}$  such  that   $\sum_{y=1}^{n}a_y  =  e$,
$\forall \rho\in\mathbb{S}$. Since any state can be regarded
as a vector  in $\R^\ell$, any measurement  induces a linear
map $M\in\M{n}{\ell}$ defined as follows
\begin{align*}
  M:&\,\R^l\rightarrow \R^n\\
      & \rho _x \mapsto \mathbf{p}_x, \qquad  
  p_x^y := a_y(\rho_x).
\end{align*}
(each row  of $M\in\M{n}{\ell}$  corresponds to  an effect).
Any state  is mapped  into a  point in  $\R^n$ (see  the Top
Fig.~\ref{fig:linearmap}).   Analogously,  since any  effect
$a_x$  corresponds to  a  vector  in $\mathbb{R}^\ell$,  any
family of  states $\{ \rho_y  \}_{y =1}^n$ induces  a linear
map defined as follows
\begin{align*}
  R:\,&\mathbb{R}^{l} \to \mathbb{R}^{n} \\
    &a_x \mapsto \mathbf{p}_x,  \qquad
      {p}^y_x := \rho_y(a_x).
\end{align*}
(each row of $R\in\M{n}{\ell}$ corresponds to a state).  Any
effect  is mapped  in  a  point in  $\R^n$  (see the  Bottom
Fig.~\ref{fig:linearmap}).

\begin{figure}[t!]
  \includegraphics[width=\columnwidth]{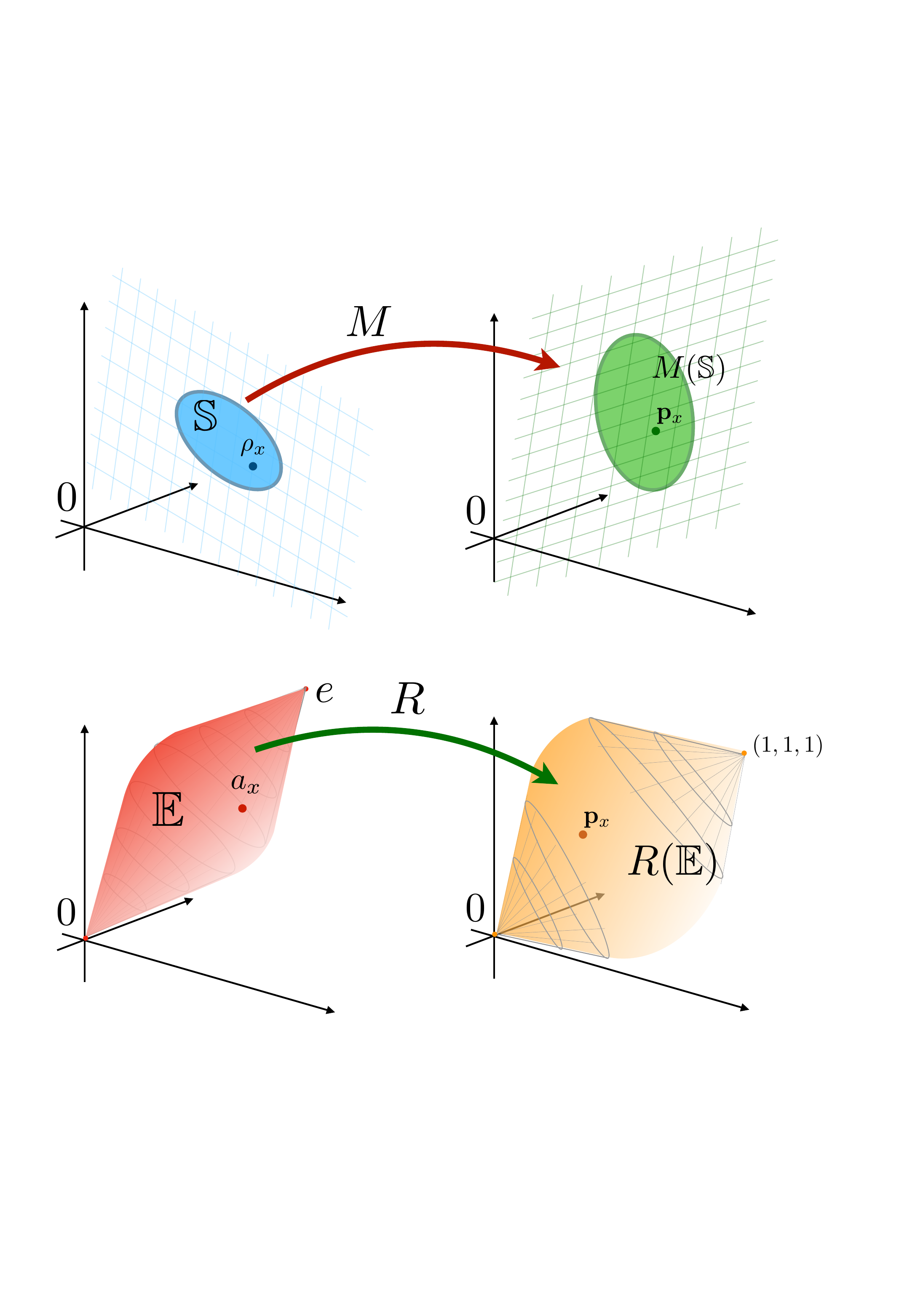}\\
  \caption{{ \bf Top}: The linear map $M$ corresponding to a
    measurement.   {\bf   Bottom}:   The  linear   map   $R$
    corresponding to a family of states.}
\label{fig:linearmap}
\end{figure}

For  example, in  quantum theory,  any system  is associated
with a  $d$-dimensional Hilbert space  $\mathcal{H}$, states
and  effects  are   represented  by  positive  semi-definite
operators  on $\mathcal{H}$,  and conditional  probabilities
are given  by the Born  rule: $a_y(\rho_x):=\Tr[a_y\rho_x]$.
States and  effects are represented  by vectors in  the real
space $\R^{\ell}$ with $\ell=d^2$.  As an example, any qubit
($d=2$)        normalized        state        $\rho        =
\tfrac{1}{2}\sum_{i=0}^{3}r^i\sigma_i$     is     univocally
associated to a  vector $(r^0,r^1,r^2,r^3)^T\in\R^4$.  Here,
the condition  $r^0 =  1$ guarantees  that $\Tr[\rho]  = 1$,
while  the   condition  $|r|   \leq  1$  for   Bloch  vector
$\mathbf{r}        :=         (r^1,r^2,r^3)^T$,        where
$\{\sigma_i\}_{i=1}^{3}$ are the  Pauli matrices, guarantees
that $\rho \ge 0$. The  set of normalized states, identified
by the constraint $r^0=1$, is geometrically represented by a
sphere  (the Bloch  sphere) contained  in a  $3$-dimensional
strictly affine subspace of $\mathbb{R}^4$.

As  a  second  example,  in  classical  theory  the  set  of
normalized states of any system with linear dimension $\ell$
is  an $\ell$-simplex,  e.g.  a  segment for  the \textit{bit}
system $\ell=1$  and a  triangle for the  \textit{trit} system
$\ell=2$.

In  the  literature, toy  models  have  been proposed  whose
convex set  of states  (and effects)  differs from  both the
quantum and  the classical  ones.  The most  notable example
are  the PR-boxes,  whose  convex set  of  states of  linear
dimension $\ell=3$ is a square.

\section{Data-driven inference}

The   \textit{data-driven   inference}    (DDI)   of   quantum
measurements, presented in the Main Text, is a protocol that
allows  to  infer  a  preferred (according  to  a  maximally
noncommittal  criterion)  measurement  from a  set  of  data
interpreted as  the outcome  distributions of  an experiment.
The maximally noncommittal criterion  is very natural: among
the   set  of   measurements   whose   range  includes   the
experimental  points, we  choose  those with  minimum-volume
range  in  the space  of  outcome  distributions. Hence,  the
algorithmic idea  at the basis of  the data-driven inference
of quantum measurements is very simple: i) the first step is
the  search of  the  minimum-volume enclosing  ranges for  a
given set  of points, ii) the  second step is the  search of
the measurements that  are able to reproduce  such ranges as
the outcome distributions of an experiment.

It  is intuitive  that in  an analogous  way one  can define
data-driven inference  of quantum states.  Moreover,  due to
its  purely  geometrical  nature, the  idea  of  data-driven
inference of physical devices is not anchored to the quantum
formalism,  but  can be  defined  in  the  same way  in  any
possible  probabilistic  theory.  Within  this  perspective,
here we define the data-driven inference of measurements and
of states in the framework of general probabilistic theories
\cite{PhysRevA.75.032110,                popescu1994quantum,
  PhysRevA.75.032304,PhysRevA.71.022101,   hardy2001quantum,
  Popescu:2014aa,                              d2010testing,
  dariano_chiribella_perinotti_2017}.

\subsection{Inference of measurements}

A setup  comprising two boxes,  one equipped with  $n$ bulbs
and the other equipped with  $m$ buttons, is given.  At each
run  of the  experiment a  theoretician, say  Alice, presses
button  $x$  and records  which  bulb  $y$ lights  up.   She
iterates   this   procedure   many  times,   recording   the
frequencies  $\{ \mathbf{p}_x  \}$ whose  $y$-th element  is
the probability of  outcome $y$ given input  $x$. This situation
is     illustrated     in      the     upper     part     of
Fig.~\ref{fig:experimental-setups}.

\begin{figure}[t!]
  \includegraphics[width=.8\columnwidth]{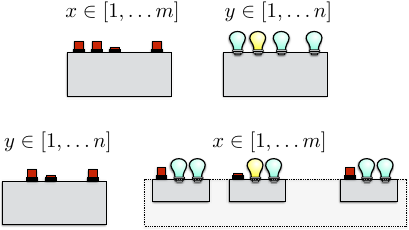}
  \caption{{\bf   Top}:   Setup   for   the   inference   of
    measurements.  {\bf Bottom}: Setup  for the inference of
    families of states.  }
  \label{fig:experimental-setups}
\end{figure}

The aim  of data-driven  inference is  to infer  the maximal
noncommittal   linear  maps   $M  :   \R^l\rightarrow  \R^n$
consistent with  $\{ \mathbf{p}_x  \}$, that is,  the linear
map $M$  that minimize the  volume of $M(\bb{S})$  such that
the  range   $M(\bb{S})$  contains  the   distributions  $\{
\mathbf{p}_x\}\subseteq\R^n$. We recall that $\bb{S}$ is the
set of all states of  the system of linear dimension $\ell$.
As  already  noticed  in  the  Main  Text,  not  any  linear
transformation corresponds  to a legitimate  measurement: in
case of  a non physical  inference $M$, failure  is declared
and a  larger set  $\{ \mathbf{p}_x  \}$ is  required.  This
definition of the problem naturally identifies two steps: i)
inferring  the (possibly  non  unique) minimum-volume  range
$\hat{\mc{R}}$ consistent with $\{ \mathbf{p}_x \}$, and ii)
finding   the   measurements   $\hat{M}$  whose   range   is
$\hat{\mc{R}}$.

In the following definition we  formalize the first of these
two steps, that  is, the inference process  that consists of
finding  the  minimum-volume  range  $M(\bb{S})$  such  that
$M(\bb{S})  \supseteq \{  \mathbf{p}_x \}$.

\begin{dfn}[Data-driven inference of measurements]
  \label{def:ddi-measurements}
  For  any set  $\{\mathbf{p}_x\}\subseteq\R^n $,  we denote
  with $\ddi  ( \{  \mathbf{p}_x \}  \; |  \; \bb{S}  )$ the
  \textit{data-driven inference} map
  \begin{align}
    \label{eq:ddi-measurements}
    \ddi \left(  \left\{ \mathbf{p}_x  \right\} \;  \big| \;
    \bb{S}  \right)  =   \argmin_{\substack{M(\bb{S})\\\{\mathbf{p}_x\}  \subseteq
    M(\bb{S}) \subseteq  \aff \{\mathbf{p}_x\}}}
    \vol\left(M(\bb{S})\right),
  \end{align}
  where $\vol ( \mathcal{R}  )$ denotes the Euclidean volume
  of  $\mathcal{R}$ and  the  minimization  is over  subsets
  $M(\bb{S})$ corresponding to linear transformations of the
  set of  states $\bb{S}$  that lie  on the  affine subspace
  generated by $\{\mathbf{p}_x\}$.
\end{dfn}

In general the outcome of the  DDI map is not unique, and the
map  returns  a  set  of  ranges. We  notice  that,  if  the
available prior  information does not identify  a unique set
$\bb{S}$  of   states,  the  set  $\bb{S}$   itself  can  be
considered as part of the  optimization problem in the above
definition,      by      taking     the      minimum      of
Eq.~\eqref{eq:ddi-measurements} over any possible $\bb{S}$.

We  can now  derive the  main result  of this  section, that
establishes the  special role played  by (hyper)-ellipsoidal
sets  $\bb{S}$  of states  in  the  context  of the  DDI  of
measurements.

Before  stating  the  main  theorem we  need  the  following
definition.

\begin{dfn}[$\mc{U}$-symmetric set]
  Given a set of transformations $\mc{U}$, and a set $\mc{X}
  \subseteq    \R^k$,    we    say    that    $\mc{X}$    is
  $\mc{U}$-symmetric  if  $U(\mc{X})  :=  \{  U\mathbf{x}  |
  \mathbf{x} \in  \mc{X} \} = \mc{X}$  for any $U\in\mc{U}$,
  where we  have chosen a $k$-dimensional  representation of
  the set $\mc{U}$.
\end{dfn}

\begin{thm}
  \label{thm:inference-sm}
  Given a  set of  states $\bb{S}$ the  following conditions
  are equivalent:
  \begin{enumerate}
  \item\label{item:singleton-sm} $\ddi(\mc{X}|\bb{S})$  is a
    singleton for any $\mc{X} \subseteq \R^n$ and any $n \in
    \N$.
  \item\label{item:symmetry-sm}   $\ddi(\mc{X}|\bb{S})$   is
    $\mc{U}$-symmetric for any $\mc{U}$--symmetric $\mc{X}$,
    where $\mc{U}$ is a set of orthogonal matrices.
  \item\label{item:ellipsoid-sm}      $\bb{S}$     is      a
    (hyper)--ellipsoid.
  \end{enumerate}
\end{thm}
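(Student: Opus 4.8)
The plan is to prove the three conditions equivalent via the cyclic chain (\ref{item:ellipsoid-sm})$\,\Rightarrow\,$(\ref{item:singleton-sm})$\,\Rightarrow\,$(\ref{item:symmetry-sm})$\,\Rightarrow\,$(\ref{item:ellipsoid-sm}). Throughout I would use the fact, established earlier in this Supplemental Material, that the admissible ranges $M(\bb{S})$ in Definition~\ref{def:ddi-measurements} are precisely the affine images of $\bb{S}$ lying inside $\aff\mc{X}$, together with the elementary observation that an affine image of $\bb{S}$ has the same dimension as $\bb{S}$ exactly when the generating affine map is injective on $\aff\bb{S}$; in that case the image is an affine-regular copy of $\bb{S}$, and in particular it is an ellipsoid if and only if $\bb{S}$ is.

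For (\ref{item:ellipsoid-sm})$\,\Rightarrow\,$(\ref{item:singleton-sm}): if $\bb{S}$ is a (hyper)-ellipsoid, the affine images of $\bb{S}$ that lie in $\aff\mc{X}$ and enclose $\mc{X}$ are exactly the ellipsoids of $\aff\mc{X}$ enclosing $\mc{X}$, so $\ddi(\mc{X}|\bb{S})$ is the minimum-volume enclosing ellipsoid, which by the L\"owner--John theorem~\cite{John2014,boyd2004convex} exists and is unique. For (\ref{item:singleton-sm})$\,\Rightarrow\,$(\ref{item:symmetry-sm}): let $\mc{X}\subseteq\R^n$ be $\mc{U}$-symmetric with $\mc{U}$ a set of orthogonal matrices; any $U\in\mc{U}$ fixes $\aff\mc{X}$ (because $U\mc{X}=\mc{X}$), maps affine images of $\bb{S}$ to affine images of $\bb{S}$ (since $UM$ is linear), preserves the enclosing constraint, and preserves volume, so it permutes the set $\ddi(\mc{X}|\bb{S})$; if that set is a singleton $\{\mc{R}\}$ then $U(\mc{R})=\mc{R}$ for every $U\in\mc{U}$, i.e.\ $\mc{R}$ is $\mc{U}$-symmetric.

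The core of the argument is (\ref{item:symmetry-sm})$\,\Rightarrow\,$(\ref{item:ellipsoid-sm}), where I would probe $\bb{S}$ with a maximally symmetric data set. Put $d:=\dim\bb{S}$ and take $\mc{X}=S^{d-1}\subseteq\R^d$, the Euclidean unit sphere (whose affine hull is all of $\R^d$), with $\mc{U}=O(d)$ acting in its defining representation; then $\mc{X}$ is $\mc{U}$-symmetric. A sufficiently dilated affine copy of the full-dimensional body $\bb{S}$ encloses $S^{d-1}$, so the feasible set is nonempty, and by a standard compactness argument the minimum is attained; pick $\mc{R}\in\ddi(S^{d-1}|\bb{S})$. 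Since $\mc{R}\supseteq S^{d-1}$, $\mc{R}$ spans $\R^d$, hence is an affine-regular copy of $\bb{S}$. By (\ref{item:symmetry-sm}), $\mc{R}$ is $O(d)$-symmetric; but a compact convex body invariant under all orthogonal transformations must be a Euclidean ball centered at the origin --- if it contains a point at distance $r$ from the origin it contains the entire sphere of that radius, hence, being convex and compact, the whole ball. Thus $\mc{R}$ is a ball, and as $\bb{S}$ is an affine image of $\mc{R}$ it is a (hyper)-ellipsoid.

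The main obstacle is this last step, and the conceptual idea it rests on is the choice of probe: the sphere is the unique set with symmetry group the full $O(d)$, and $O(d)$ is rigid enough that a ball is the only convex body it can fix. The remaining points are routine but should be checked: that the admissible ranges really coincide with the affine images of $\bb{S}$ and that a full-dimensional one pins down $\bb{S}$ up to affine equivalence; that the minimum defining $\ddi(S^{d-1}|\bb{S})$ is attained; and that L\"owner--John uniqueness persists under the side constraint $M(\bb{S})\subseteq\aff\mc{X}$. One should also note, for implication (\ref{item:singleton-sm}), that the statement is to be understood for those $\mc{X}$ that some affine copy of $\bb{S}$ can enclose --- equivalently $\dim\aff\mc{X}\le\dim\bb{S}$ --- since otherwise the feasible set, and hence $\ddi(\mc{X}|\bb{S})$, is empty.
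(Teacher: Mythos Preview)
Your proof is correct and follows essentially the same cyclic route as the paper's: John's uniqueness theorem for (\ref{item:ellipsoid-sm})$\Rightarrow$(\ref{item:singleton-sm}), the volume-preservation/permutation argument for (\ref{item:singleton-sm})$\Rightarrow$(\ref{item:symmetry-sm}), and probing with a sphere under its full orthogonal group for (\ref{item:symmetry-sm})$\Rightarrow$(\ref{item:ellipsoid-sm}). Your treatment is simply more explicit than the paper's terse one-liner for (\ref{item:symmetry-sm})$\Rightarrow$(\ref{item:ellipsoid-sm})---you spell out why an $O(d)$-invariant compact convex body must be a ball and why a full-dimensional affine image then pins down $\bb{S}$ as an ellipsoid---but the key idea (take $\mc{X}$ a sphere, use Lemma~\ref{lmm:affine} to identify admissible ranges with affine images, and invoke the rigidity of $O(d)$) is the same.
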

\begin{proof}
  Let us prove each implication separately:
  
  \noindent 
  $\ref{item:singleton-sm} \Rightarrow \ref{item:symmetry-sm}$

  \noindent
  Consider  a $\mc{U}$--symmetric  $\mc{X}$, and  suppose by
  absurd that $\ddi(\mc{X}|\bb{S})$ is a singleton $\mc{Y}$,
  but  $\mc{Y}$  is  not  $\mc{U}$--symmetric.   Then  there
  exists  $U  \in  \mc{U}$  such   that  $\mc{Y}'  :=  \{  U
  \mathbf{y} \; | \; \mathbf{y}  \in \mc{Y} \} \neq \mc{Y}$.
  Since  $\mc{Y}' \supseteq  \mc{X}$ and  $\vol (\mc{Y}')  =
  \vol   (\mc{Y})$,  one   has  the   absurd  $\mc{Y}'   \in
  \ddi(\mc{X}|\bb{S})$.

  \noindent
  $\ref{item:symmetry-sm} \Rightarrow \ref{item:ellipsoid-sm}$
  
  \noindent      The      implication      follows      from
  Lemma~\ref{lmm:affine}, by taking $\mc{X}$  to be a sphere
  and   observing  that   $\{M(\bb{S})\,|\,M\in\M{n}{\ell})$
  contains  all  spheres  if  and  only  if  $\bb{S}$  is  a
  (hyper)--ellipsoid.
  
  \noindent
  $\ref{item:ellipsoid-sm} \Rightarrow \ref{item:singleton-sm}$ 

  \noindent Follows immediately from Lemma~\ref{lmm:affine},
  due  to  John's  uniqueness  theorem  for  minimum--volume
  enclosing ellipsoids~\cite{John2014}.
\end{proof}

Notice that for (hyper)-ellipsoidal $\bb{S}$, for any $n \in
\N$ and any $\mc{X} \subseteq \R^n$ one has
\begin{align*}
  \ddi(\mc{X}|\bb{S})  =   \left\{  \mvee  \left(   \mc{X}  \right)
  \right\},
\end{align*}
where $\mvee(\mc{X})$ denotes the minimum--volume enclosing
ellipsoid~\cite{todd2016minimum} for $\mc{X}$.

\subsection{Inference of states}\label{s:inference of states}

A  setup  comprising several  boxes  is  given. One  box  is
equipped with $n$  buttons, and the remaining  $m$ boxes are
equipped with one  button and two light bulbs  each. At each
run  of the  experiment,  Alice presses  button  $y$ of  the
former box, and selects box $x$ among the remaining boxes by
pressing its button. She iterates this procedure many times,
recording the frequencies $\{  \mathbf{p}_x \}$ whose $y$-th
element is the  probability of the first bulb of  box $x$ to
light up  given $y$.  This  situation is illustrated  in the
bottom part of Fig.~\ref{fig:experimental-setups}.

The aim  of data-driven  inference is  to infer  the maximal
noncommittal   linear  maps   $R  :   \R^l\rightarrow  \R^n$
consistent with  $\{ \mathbf{p}_x  \}$, that is,  the linear
maps $R$ that  minimize the volume of  $R(\bb{E})$ such that
the  range   $R(\bb{E})$  contains  the   distributions  $\{
\mathbf{p}_x\}\subseteq\R^n$. We recall that $\bb{E}$ is the
set of all effects of the system of linear dimension $\ell$.
Notice that not  any linear map $R$ corresponds to  a set of
physical states:  in case of  a non physical  inference $R$,
failure is declared and a larger set $\{ \mathbf{p}_x \}$ is
required.    This  definition   of  the   problem  naturally
identifies two steps: i) inferring the (possibly non unique)
minimum-volume  range  $\hat{\mc{R}}$  consistent  with  $\{
\mathbf{p}_x \}$, and ii) finding the linear transformations
$\hat{R}$ whose range is $\hat{\mc{R}}$.

In the following definition we  formalize the first of these
two steps, that  is, the inference process  that consists of
finding  the  minimum-volume  range  $R(\bb{E})$  such  that
$R(\bb{E})  \supseteq \{  \mathbf{p}_x \}$.

\begin{dfn}[Data-driven inference of states]
  \label{def:ddi-states}
  For  any set  $\{\mathbf{p}_x\}\subseteq\R^n $,  we denote
  with $\ddi  ( \{  \mathbf{p}_x \}  \; |  \; \bb{E}  )$ the
  \textit{data-driven inference} map
  \begin{align}
    \label{eq:ddi-states}
    \ddi \left(  \left\{ \mathbf{p}_x  \right\} \;  \big| \;
    \bb{E}  \right)  =   \argmin_{\substack{R(\bb{E})\\\{\mathbf{p}_x,\mathbf{0},\mathbf{1}\}  \subseteq
    R(\bb{E}) \subseteq  \spn
    \{\mathbf{p}_x,\mathbf{1}\}
    \\
    R(e)=\mathbf{1}
    }}
    \vol\left(R(\bb{E})\right),
  \end{align}
  where   the  minimization   is  over   subsets  $R(\bb{E})
  \subseteq    \mathbb{R}^n$    corresponding   to    linear
  transformations of the set of effects $\bb{E}$ that lie on
  the         real         span         generated         by
  $\{\mathbf{p}_x\}\cup\{\mathbf{1}\}$,  and the  two points
  $R(0)=\mathbf{0}=(0,0,\ldots,0)\in\R^n$,
  $R(e)=\mathbf{1}=(1,1,\ldots,1)\in\R^n$ are  the images of
  the  null effect  $0\in\bb{E}$  and  of the  deterministic
  $e\in\bb{E}$  effect  respectively  (this  last  condition
  poses     a     not      trivial     linear     constraint
  $R(e)=\mathbf{1}$).
\end{dfn}

In general the output of the  DDI map is not unique, and the
map returns  a set. We  notice that, if the  available prior
information  does  not identify  a  unique  set $\bb{E}$  of
effects, the set  $\bb{E}$ itself can be  considered as part
of  the optimization  problem  in the  above definition,  by
taking  the minimum  of  Eq.~\eqref{eq:ddi-states} over  any
possible $\bb{E}$.

In the definition of DDI of states, the set $\{ \mathbf{p}_x
\}$   has  been   extended   to  include   the  two   points
$\mathbf{0}=(0,0,\ldots,0)\in\R^n$                       and
$\mathbf{1}=(1,1,\ldots,1)\in\R^n$,  corresponding   to  the
null   $0\in\bb{E}$   and   to  the   deterministic   effect
$e\in\bb{E}$ respectively.   This is  so because  the points
$\mathbf{0}$ and $\mathbf{1}$, which could be uncollected by
Alice,  strongly  characterize  the geometry  of  the  range
$R(\bb{E})$   of    the   linear    map   $R$    (see   also
Fig.~\ref{fig:linearmap}).

There are two main differences between the DDI of states and
that         of         measurements        given         in
Definition~\ref{def:ddi-measurements},   when  regarded   as
optimization problems.   The first difference is  in the set
of  points  where the  linear  function  to be  inferred  is
applied.  Indeed, in DDI of states the set $\bb{E}$ is not a
strictly  affine  subspace  of   $\R^l$,  while  in  DDI  of
measurements the  convex set  $\bb{S}$ is a  strictly affine
subspace  of   $\R^l$  of   dimension  $\ell-1$   (see  also
Fig.~\ref{fig:linearmap}). Moreover, the fixed points of the
linear  map $R$  in the  DDI of  states introduce  a further
linear  constraint $R(e)=\mathbf{p}_e$  to the  optimization
problem. Due to these differences we cannot provide a simple
characterization of the DDI of states as for example the one
in Theorem \ref{thm:inference-sm} for DDI of measurements. A
more accurate geometrical analysis of the DDI map for states
will be the subject of future research.

\section{Range inversion}

In both cases of inference  presented above the device to be
inferred  (either  a  measurement  or a  family  of  states)
induces a linear map
\begin{align}
  D:\,&\R^\ell\rightarrow \R^n,
\end{align}
for some $n\in\bb{N}$.

We denote by $\mathcal{A}\in\R^\ell$  a subset of the domain
of  the map  $D$  (this  corresponds to  the  set of  states
$\mathcal{S}$  or the  set of  effects $\mathcal{E}$  in the
inference protocol) and we denote by $D(\mathcal{A})\in\R^n$
the image of $\mathcal{A}$ via the map $D$ (this corresponds
to  the set  of points  $\{\mathbf{p}_x\}_{x=1}^m$ collected
via the inference protocol):
\begin{align*}
  D (\mc{A})  := \left\{ D\mathbf{a}\;  | \;  \mathbf{a}  \in  \mc{A}\right\}.
\end{align*}
Finally we denote  the set of all  linear transformations of
$\mc{A}$ into $\R^n$ as
\begin{align}
\label{eq:Dn}
  \mc{D}_n (\mc{A}) := \left\{ D(
  \mc{A})  \; | \;  \forall D \in  \M{n}{\ell} \;
  \right\}.
\end{align}

We can  now introduce a  notion of equivalence for  maps $D$
based     on    the     coincidence    of     their    range
$D(\bb{A})$.

\begin{dfn}[Equivalence]\label{def:equivalence}
  For any $n \in \N$ and  any $D \in \M{n}{\ell}$, we denote
  with $[D]$ the equivalence class
  \begin{align*}
    \left[ D \right] := \left\{  D' \in \M{n}{\ell} \; |
    \; D'\left( \bb{A} \right) =  D \left(
    \bb{A} \right) \right\}.
  \end{align*}
\end{dfn}

Notice that  any two  elements of  $[D]$ do  not necessarily
share the same support.

In the Main Text we  referred to any transformation $U$ that
leaves  a  set $\bb{A}$  invariant,  that  is such  that  $U
(\bb{A})  =  \bb{A}$,  as  a  \textit{gauge  symmetry}.  The
following  theorem  shows  that  the  equivalence  class  in
Definition \ref{def:equivalence}  is fully specified  by the
symmetries of $\bb{A}$.
\begin{thm}\label{thm:equivalence}
  For any $n \in \N$ and any $D \in \M{n}{\ell}$ one has
  \begin{align}
    \label{eq:equivalence}
    \left[  D \right]  = \left\{  DU \;  | \;  \forall U
    \textrm{ s.t.  } D^+DU \left( \bb{A} \right) =
    D^+D \left( \bb{A} \right) \right\}.
  \end{align}
\end{thm}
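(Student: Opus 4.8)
The plan is to establish the two inclusions in \eqref{eq:equivalence} separately, using nothing more than the identity $DD^+D = D$ of the Moore--Penrose pseudoinverse (which in particular implies that $DD^+$ fixes $\rng D$ pointwise), together with the observation that $\bb{A}$ --- being the full state space $\bb{S}$ or effect space $\bb{E}$ --- linearly spans $\R^\ell$. I would record this spanning property at the very beginning: the normalized states form a strictly affine $(\ell-1)$-dimensional slice of $\R^\ell$, whose linear span is all of $\R^\ell$, while the effects form a full-dimensional solid.

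For the inclusion $\supseteq$: given $U$ with $D^+DU(\bb{A}) = D^+D(\bb{A})$ (an equality of subsets of $\R^\ell$), I would write $DU = (DD^+D)U = DD^+(DU)$, whence $DU(\bb{A}) = D\big(D^+DU(\bb{A})\big) = D\big(D^+D(\bb{A})\big) = (DD^+D)(\bb{A}) = D(\bb{A})$, so that $DU \in [D]$.

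For the inclusion $\subseteq$: given $D' \in [D]$, i.e. $D'(\bb{A}) = D(\bb{A})$, I would first pass to spans: since $\bb{A}$ spans $\R^\ell$ and $D, D'$ are linear, $\rng D = \spn D(\bb{A}) = \spn D'(\bb{A}) = \rng D'$. Hence every column of $D'$ lies in $\rng D$, so $DD^+D' = D'$; taking $U := D^+D'$ then gives $DU = DD^+D' = D'$. It remains to verify admissibility: $D^+DU(\bb{A}) = D^+\big(DU(\bb{A})\big) = D^+\big(D'(\bb{A})\big) = D^+\big(D(\bb{A})\big) = D^+D(\bb{A})$, so $U$ is an admissible transformation and $D' = DU$ lies in the right-hand side of \eqref{eq:equivalence}.

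The step to state carefully is not a computation but the hypothesis that $\bb{A}$ linearly spans $\R^\ell$: this is exactly what upgrades the equality of images $D(\bb{A}) = D'(\bb{A})$ to the equality of column spaces $\rng D = \rng D'$, which is in turn what makes the factorization $D' = DU$ possible at all (without it one could modify $D'$ outside $\spn\bb{A}$ and leave $D'(\bb{A})$ untouched while changing $\rng D'$). All the remaining manipulations are routine applications of $DD^+D = D$.
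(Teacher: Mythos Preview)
Your proof is correct and follows essentially the same route as the paper's own argument: both directions use the identity $DD^+D=D$, and the factorization $D'=DU$ with $U:=D^+D'$ is obtained by first upgrading $D(\bb{A})=D'(\bb{A})$ to $\rng D=\rng D'$ via the spanning hypothesis $\spn\bb{A}=\R^\ell$, exactly as in the paper. Your explicit remark that this spanning hypothesis is what makes the factorization possible is a nice clarification that the paper leaves implicit.
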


\begin{proof}
  The statement can be rephrased as follows. For any $D' \in
  \M{n}{\ell}$ the following conditions are equivalent:
  \begin{enumerate}
  \item\label{item:symmetry}    There    exists    $U    \in
    \M{n}{\ell}$ such that $D' =  D U$ and $D^+D U(\bb{A}) =
    D^+D(\bb{A})$.
  \item\label{item:equivalence} $D' \left(  \bb{A} \right) =
    D\left(\bb{A}\right)$.
  \end{enumerate}

  Let us prove each implication separately:

  \noindent         $\ref{item:symmetry}         \Rightarrow
  \ref{item:equivalence}$

  \noindent By hypothesis $U$ is  such that $D^+D U ( \bb{A}
  ) = D^+D (\bb{A})$.  By multiplying both sides by $D$ from
  the left one  has $DD^+DU ( \bb{A} ) =  DD^+D ( \bb{A} )$.
  Since $D = DD^+D$ one has $DU  ( \bb{A} ) = D ( \bb{A} )$.
  Hence the thesis.

  \noindent        $\ref{item:equivalence}       \Rightarrow
  \ref{item:symmetry}$

  \noindent Since by hypothesis  $D' (\bb{A}) = D (\bb{A})$,
  one has  $\spn D' (\bb{A})  = \spn D (\bb{A})$.   Since by
  hypothesis  $\spn\bb{A}  =  \R^\ell$,   one  has  $\spn  D
  (\bb{A})  = \rng  D$ and  $\spn  D' (\bb{A})  = \rng  D'$.
  Hence, $\rng  D = \rng  D'$, and thus  $D D^+ =  D' D'^+$.
  Then $D'  = D'  D'^+ D'  = D  D^+ D'$.   By setting  $U :=
  D^+D'$ one has $DU (\bb{A}) = D (\bb{A})$.  By multiplying
  both sides by $D^+$ from the  left one has $D^+DU ( \bb{A}
  ) = D^+D ( \bb{A} )$.  Hence the thesis.
\end{proof}

An immediate corollary of the theorem is:
\begin{cor}
  For any $D$ such tat $D^+D = \openone$, due to
  Eq.~\eqref{eq:equivalence} one has that any $D' \in [D]$ is
  equivalent to $D$ up to a symmetry $U$ of the set $\bb{A}$.
\end{cor}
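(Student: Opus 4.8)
The plan is to specialize Theorem~\ref{thm:equivalence} to the hypothesis $D^+D = \openone$. Recall that for $D \in \M{n}{\ell}$ the matrix $D^+D$ is the orthogonal projector of $\R^\ell$ onto $(\ker D)^\perp$, so the assumption $D^+D = \openone$ is equivalent to $\ker D = \{0\}$, i.e. to $D$ being injective --- which, in the language of the Main Text, is precisely the statement that the inferred device $D$ is informationally complete. Under this assumption, the constraint on $U$ appearing on the right-hand side of Eq.~\eqref{eq:equivalence}, namely $D^+DU(\bb{A}) = D^+D(\bb{A})$, simplifies to $U(\bb{A}) = \bb{A}$, which is exactly the definition of a gauge symmetry of $\bb{A}$ recalled just above the theorem.

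Concretely, the first step is to substitute $D^+D = \openone$ into Eq.~\eqref{eq:equivalence}, obtaining
\begin{align*}
  \left[ D \right] = \left\{ D U \; \big| \; U(\bb{A}) = \bb{A} \right\}.
\end{align*}
The second step is then immediate: given any $D' \in [D]$, the displayed identity furnishes a transformation $U$ with $U(\bb{A}) = \bb{A}$ and $D' = D U$, and by the definition of gauge symmetry this says exactly that $D'$ differs from $D$ by a symmetry of $\bb{A}$, which is the assertion of the corollary.

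I do not expect any genuine obstacle: beyond invoking Theorem~\ref{thm:equivalence} as a black box, the only fact used is the elementary identity that $D^+D$ is the orthogonal projector onto $(\ker D)^\perp$, which is what makes $D^+D = \openone$ a meaningful and natural hypothesis. Should one prefer a self-contained argument that does not cite Theorem~\ref{thm:equivalence}, one can reproduce its proof in this special case: if $D'(\bb{A}) = D(\bb{A})$ and $\spn \bb{A} = \R^\ell$, then $\rng D' = \rng D$, hence $D D^+ = D' D'^+$ and $D' = D' D'^+ D' = D D^+ D'$; setting $U := D^+ D'$ then gives $D U = D D^+ D' = D'$, while $U(\bb{A}) = D^+ D'(\bb{A}) = D^+ D(\bb{A}) = \bb{A}$, where the last equality uses $D^+D = \openone$. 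Thus $U$ is a gauge symmetry of $\bb{A}$ with $D' = D U$, as claimed.
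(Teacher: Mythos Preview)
Your proposal is correct and matches the paper's approach exactly: the paper states the corollary as an immediate consequence of Eq.~\eqref{eq:equivalence}, and your substitution $D^+D = \openone$ into that equation, reducing the side condition $D^+DU(\bb{A}) = D^+D(\bb{A})$ to $U(\bb{A}) = \bb{A}$, is precisely the intended one-line argument. The additional self-contained rederivation you offer is a correct bonus, but the paper does not require it.
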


Notice that  if the map  $D$ is  for example the  linear map
associated to  a measurement, the corollary  states that for
any  informationally  complete  measurement  $M$  the  range
$M(\bb{S})$ identifies  $M$ up to gauge  symmetries. This is
the statement of Theorem \ref{thm:range} in the Main Text of
the paper.

\subsection{Measurement range inversion}

In  case of  $(\ell-1)$--dimensional spherical  $\bb{S}$ one
can also  explicitly derive  all the  linear transformations
that correspond to any given (hyper)-ellipsoidal range. This
is the content of the following proposition.

\begin{prop}
  For $(\ell-1)$--dimensional  unit--spherical $\bb{S}$, for
  any  $n  \in   \N$  and  any  $D   \in  \M{n}{\ell}$,  let
  $\mathbf{t}  := D  \mathbf{u}$,  let $T  =  D (\openone  -
  \mathbf{u}\mathbf{u}^T)$, and let $Q = T T^T$. One has
  \begin{align*}
    D \left( \bb{S} \right)  = \left\{  \mathbf{p} \;
    \Big| \;
    \begin{cases}
      (\openone  -  Q  Q^+)   (\mathbf{p}  -  \mathbf{t})  =
      0\\  (\mathbf{p}  -  \mathbf{t})^T Q^+  (\mathbf{p}  -
      \mathbf{t}) \le 1
    \end{cases}
    \right\}.
  \end{align*}
\end{prop}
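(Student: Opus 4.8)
The plan is to compute the image $D(\bb{S})$ directly and then match it to the claimed set by the affine substitution $\mathbf{p} = \mathbf{t} + \mathbf{y}$. Recall that a $(\ell-1)$-dimensional unit-spherical state space is $\bb{S} = \{\, \mathbf{u} + \mathbf{v} \mid \mathbf{v} \perp \mathbf{u},\ \|\mathbf{v}\| \le 1 \,\}$, i.e.\ the closed unit ball of the hyperplane $\mathbf{u}^\perp$ translated by its (unit) centre vector $\mathbf{u}$. Set $P := \openone - \mathbf{u}\mathbf{u}^T$; since $\|\mathbf{u}\| = 1$, $P$ is the orthogonal projector onto $\mathbf{u}^\perp$, hence $T = DP$, $T\mathbf{u} = DP\mathbf{u} = 0$, and $D\mathbf{v} = T\mathbf{v}$ for every $\mathbf{v}\in\mathbf{u}^\perp$. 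Consequently $D(\bb{S}) = \{\, D\mathbf{u} + D\mathbf{v} \mid \mathbf{v}\in\mathbf{u}^\perp,\ \|\mathbf{v}\|\le 1 \,\} = \mathbf{t} + T(B)$, where $B$ is the closed unit ball of $\R^\ell$; the only nonobvious step, that $T(B)$ is contained in the former set, follows by replacing $\mathbf{w}\in B$ with $P\mathbf{w}$, which has no larger norm and the same image under $T$. Thus the proposition reduces to the purely linear-algebraic identity
\[
  T(B) \;=\; \{\, \mathbf{y}\in\R^n \mid (\openone - QQ^+)\mathbf{y} = 0,\ \mathbf{y}^T Q^+ \mathbf{y} \le 1 \,\}, \qquad Q = TT^T .
\]

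To prove this I would use a singular value decomposition $T = \sum_i \sigma_i\, \mathbf{a}_i\mathbf{b}_i^T$ with $\sigma_i > 0$ and $\{\mathbf{a}_i\}\subset\R^n$, $\{\mathbf{b}_i\}\subset\R^\ell$ orthonormal; then $Q = \sum_i \sigma_i^2\, \mathbf{a}_i\mathbf{a}_i^T$, $Q^+ = \sum_i \sigma_i^{-2}\, \mathbf{a}_i\mathbf{a}_i^T$, and $QQ^+ = \sum_i \mathbf{a}_i\mathbf{a}_i^T$ is the orthogonal projector onto $\rng Q = \rng T = \spn\{\mathbf{a}_i\}$. For the inclusion from left to right: any $\mathbf{y} = T\mathbf{w}$ with $\|\mathbf{w}\|\le 1$ equals $\sum_i \sigma_i\langle\mathbf{b}_i,\mathbf{w}\rangle\,\mathbf{a}_i \in \rng T$, so $(\openone - QQ^+)\mathbf{y} = 0$, while $\mathbf{y}^T Q^+\mathbf{y} = \sum_i \langle\mathbf{b}_i,\mathbf{w}\rangle^2 \le \|\mathbf{w}\|^2 \le 1$. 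For the reverse inclusion: if $\mathbf{y}$ lies in the right-hand set then the first condition forces $\mathbf{y} = \sum_i y_i\,\mathbf{a}_i$, and taking $\mathbf{w} := \sum_i \sigma_i^{-1} y_i\,\mathbf{b}_i$ gives $T\mathbf{w} = \mathbf{y}$ and $\|\mathbf{w}\|^2 = \sum_i \sigma_i^{-2} y_i^2 = \mathbf{y}^T Q^+\mathbf{y} \le 1$, so $\mathbf{y}\in T(B)$. Substituting $\mathbf{y} = \mathbf{p} - \mathbf{t}$ gives the displayed characterization of $D(\bb{S})$.

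I do not expect a genuine obstacle: once the reduction in the first paragraph is in place, everything is routine manipulation of the singular value decomposition. The two points deserving care are (i) checking $T\mathbf{u} = 0$ and that $P$ is a projector, which is exactly where the hypothesis that $\bb{S}$ is \emph{unit}-spherical (centre a unit vector) enters, so that the $\mathbf{u}$-direction contributes only the translation $\mathbf{t} = D\mathbf{u}$; and (ii) the Moore--Penrose bookkeeping when $\rng T \subsetneq \R^n$, namely that $QQ^+$ is the orthogonal projector onto $\rng T$ and that $Q^+$ inverts $Q$ exactly on that subspace, both of which are transparent from the spectral decomposition above. This degeneracy is precisely why the answer appears as a \emph{pair} of constraints --- a linear equation cutting out the affine hull $\mathbf{t} + \rng T$ of the range, together with the ellipsoidal inequality within it --- rather than a single quadratic inequality.
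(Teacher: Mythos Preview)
Your argument is correct. The reduction $D(\bb{S}) = \mathbf{t} + T(B)$ is valid (the step replacing $\mathbf{w}$ by $P\mathbf{w}$ is exactly what is needed), and the SVD computation that follows is clean and complete.

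The paper's proof reaches the same conclusion by a slightly different route. Rather than first collapsing the problem to the image of the full unit ball under $T$, it keeps both constraints $|\mathbf{a}-\mathbf{u}|_2\le 1$ and $\mathbf{u}^T\mathbf{a}=1$ explicit, writes $D\mathbf{a}=\mathbf{t}+T\mathbf{a}$, and then characterizes which $\mathbf{p}$ admit a preimage by using the general Moore--Penrose solution formula $\mathbf{a} = T^+(\mathbf{p}-\mathbf{t}) + \lambda\mathbf{u} + (\openone - T^+T - \mathbf{u}\mathbf{u}^T)\mathbf{v}$; the affine constraint fixes $\lambda=1$, the free $\mathbf{v}$ is argued to be irrelevant, and the norm constraint becomes $(\mathbf{p}-\mathbf{t})^T (T^+)^T T^+ (\mathbf{p}-\mathbf{t})\le 1$, after which the identities $TT^+ = QQ^+$ and $(T^+)^T T^+ = Q^+$ finish. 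Your SVD argument makes those last identities transparent, and your preliminary reduction to $T(B)$ dispenses with the bookkeeping of the $\lambda$ and $\mathbf{v}$ parameters; the paper's version, in turn, avoids choosing singular vectors and works entirely with pseudoinverse identities. Either way the content is the same elementary fact that the image of a ball under a linear map is the ellipsoid $\{\mathbf{y}\in\rng T : \mathbf{y}^T Q^+\mathbf{y}\le 1\}$.
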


\begin{proof}
  Due to Lemma~\ref{lmm:affine},  without loss of generality
  we  take  $\bb{S}$  to  be the  unit--sphere  centered  in
  $\mathbf{u}$ (centering  in $\mathbf{u}$  might apparently
  require  a  translation  of   the  sphere  on  the  affine
  subspace, but a linear  transformation suffices due to the
  Lemma; the inverse linear  transformation can be performed
  on the  effects, so  without restriction one  can consider
  the states centered in $\mathbf{u}$).

 One  has  $\mathbf{a}  \in  \bb{S}$   if  and  only  if  $|
 \mathbf{a}  -  \mathbf{u}  |_2  \le  1$  and  $\mathbf{u}^T
 \mathbf{a} = 1$.   For any $\mathbf{a} \in  \bb{S}$ one has
 $D\left(\mathbf{a}\right)  =  \mathbf{t} +  T  \mathbf{a}$.
 Hence
  \begin{align*}
    D\left(\bb{S}\right)   =  \left\{   \mathbf{p}  =
    \mathbf{t} + T \mathbf{a}  \; \Big| \; \left| \mathbf{a}
    - \mathbf{u} \right|_2 = 1, \; \mathbf{u}^T \mathbf{a} =
    1 \right\}.
  \end{align*}
  Solutions of  $T \mathbf{a} = \mathbf{p}  - \mathbf{t}$ in
  variable $\mathbf{a}$ exist if and only if
  \begin{align}
    T  T^+   (\mathbf{p}  -   \mathbf{t})  =   \mathbf{p}  -
    \mathbf{t}.
  \end{align}
  Since $T^+T$  and $\mathbf{u}\mathbf{u}^T$  are orthogonal
  projectors by construction, solutions are given by
  \begin{align*}
    \mathbf{a} = T^+ \left(  \mathbf{p} - \mathbf{t} \right)
    +  \lambda \mathbf{u}  + \left(  \openone -  T^+ T  -
    \mathbf{u}\mathbf{u}^T \right) \mathbf{v},
  \end{align*}
  for any scalar $\lambda$ and any vector $\mathbf{v}$.

  Condition $\mathbf{u}^T \mathbf{a} = 1$ imposes $\lambda =
  1$.     For   any    vector    $\mathbf{v}$   such    that
  $|\mathbf{a}|_2\le 1$, the same condition is also verified
  for  $\mathbf{v} =  0$.   Since $D(\mathbf{a})$  is
  independent of $\mathbf{v}$, without loss of generality we
  take $\mathbf{v} = 0$.
  
  Therefore one has
  \begin{align*}
    D \left(  \bb{S} \right) = \left\{  \mathbf{p} \;
    \Big| \;
    \begin{cases}
      (\openone  -  T  T^+)   (\mathbf{p}  -  \mathbf{t})  =
      0,\\   (\mathbf{p}   -   \mathbf{t})^T   {T^+}^T   T^+
      (\mathbf{p} - \mathbf{t}) \le 1.
    \end{cases}
    \right\}.
  \end{align*}
  By   the  elementary   properties  of   the  Moore-Penrose
  pseudoinverse, one  immediately has that  $T T^+ =  Q Q^+$
  and ${T^+}^T T^+ = Q^+$.  Hence the statement follows.
\end{proof}

\section{Data-driven reconstruction}

\subsection{Data-driven reconstruction of measurements}

In   the   protocol   of   data-driven   reconstruction   of
measurements, an  experimentalist, say Bob, is  in charge of
building the state-preparator  $\mc{S}$ corresponding to the
box   equipped   with  buttons   in   the   upper  part   of
Fig.~\ref{fig:experimental-setups-2}.  His aim  is to enable
Alice to correctly infer  measurement $M$, corresponding to
the box with  light bulbs, up to the  equivalence of Theorem
\ref{thm:equivalence}. In this case, we say that $\mc{S}$ is
observationally complete for $M$.

\begin{figure}[t!]
  \includegraphics[width=0.9\columnwidth]{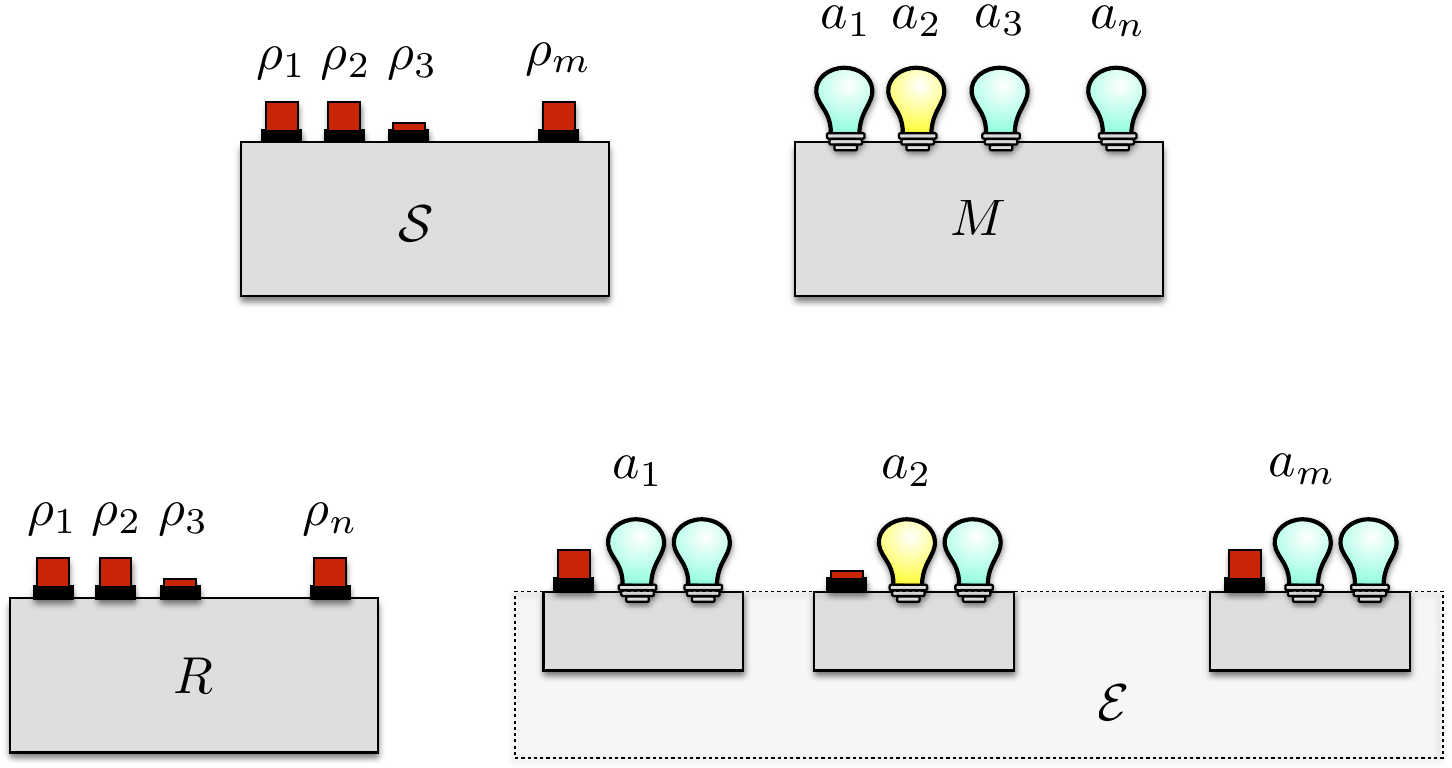}
  \caption{{\bf    Top}:   Setup    for   the    data-driven
    reconstruction  of  measurements.   On the  right  is  a
    measurement $\{a_y\}_{y=1}^{n}$, that  can be thought as
    a box $M$  equipped with $n$ bulbs, one  for each effect
    $a_y\in \mathbb{E}$, $y=1,2,\ldots, n$, corresponding to
    the possible outcomes of the measurement. On the left is
    a state preparator that can be thought as a box with $m$
    possible buttons,  each button corresponding to  a state
    $\rho_x\in\bb{S}$, $x=1,\ldots,m$.   At each run  of the
    experiment a  button is pressed  and the outcome  of the
    measurements  is recorded.   We  iterate this  procedure
    many times  recording the  frequencies $p_x^y$  that are
    our estimates of the probabilities $a_y(\rho_x)$.  Since
    each state  $\rho_x\in\mathbb{S}$ can  be regarded  as a
    vector  in  $\R^\ell$,  the   whole  experiment  can  be
    represented  by a  linear map  $M:\R^l\rightarrow \R^n$,
    with   $\rho  _x   \mapsto   \mathbf{p}_x$,  $p_x^y   :=
    a_y(\rho_x)$.      Any     state      in     the     set
    $\mathcal{S}=\{\rho_x\}_{x=1}^m$  is  associated with  a
    point  $\mathbf{p}_x\in\R^n$.  {\bf  Bottom}: Setup  for
    the  data-driven reconstruction  of families  of states.
    On the left  is a state preparator $R$  that can prepare
    one  out of  a finite  set  $\{ \rho_y  \}_{y =1}^n$  of
    states. We can think of an $n$-buttons state preparator:
    whenever we press  the button $y$ the  state $\rho_y$ is
    prepared.  On  the right is  a measurement box  with $m$
    possible    buttons:   whenever    button   $x$    (with
    $x=1,\ldots,m$) is pressed  the two outcomes measurement
    described by the pair of  effects $\{a_x, \bar a_x\}$ is
    applied to the prepared state. Give a state $\rho_y$ the
    outcome   $a_x$  will   be  obtained   with  probability
    $a_x(\rho_y)$ (clearly, $a_x(\rho_y)  + \bar a_x(\rho_y)
    =1 $).  We iterate this procedure for all the $n$ states
    $\rho_y$ and  the $m$ measurements $\{a_x,  \bar a_x\}$,
    thus recording  the frequencies $ p^y_x$,  which are our
    estimate  of the  probabilities  $ a_x(\rho_y)$.   Since
    each   effect  $a_x$   corresponds   to   a  vector   in
    $\mathbb{R}^\ell$,   the   whole   experiment   can   be
    represented  by  a   linear  map  $R:\mathbb{R}^{l}  \to
    \mathbb{R}^{n}$,   with   $a_x  \mapsto   \mathbf{p}_x$,
    ${p}^y_x  :=  a_x(\rho_y)$.   Any   effect  in  the  set
    $\mathcal{E}=\{a_x\}_{x=1}^m$ is associated with a point
    $\mathbf{p}_x\in\R^n$.}
  \label{fig:experimental-setups-2}
\end{figure}

\begin{dfn}[Observational complete set of states]
  \label{def:completeness-sm}
  Let $\{\mathbb{S},  \mathbb{E}\}$ be a physical  system of
  linear  dimension $\ell$  and $M  \in \M{n}{\ell}  $ be  a
  measurement.   A  set  of  states  $\mathcal{S}  \subseteq
  \mathbb{S}$ is \textit{observationally  complete for $M$} if
  and only if
  \begin{align}
    \label{eq:completeness-sm}
    \ddi( M  \left(  \mc{S}  \right) | \mathbb{S})= 
    M \left( \bb{S} \right)
  \end{align}
  where $\ddi$ is the data driven inference of measurements.
\end{dfn}

The following result shows  that the notion of observational
completeness depends only on the support of $M$.

\begin{thm}\label{thm:completenessgeneral-sm}
  Let $\{\mathbb{S},  \mathbb{E}\}$ be a physical  system of
  linear  dimension $\ell$  and  let $\mathcal{S}  \subseteq
  \mathbb{S}$ be  a set of  states.  Let $\mathcal{V}$  be a
  linear subspace of $\mathbb{R}^\ell$  and let $\Pi$ denote
  the  projector on  $\mathcal{V}$.   Then $\mathcal{S}$  is
  observationally complete  for $\Pi$ if  and only if  it is
  observationally complete for any  $M \in \M{n}{\ell}$ such
  that $\supp M = \mathcal{V}$, i.e.
  \begin{align}
    \begin{aligned}
      & \ddi(  \Pi \left( \mc{S} \right)  | \mathbb{S})= \Pi
      \left( \bb{S} \right) \iff \\  & \ddi( M \left( \mc{S}
      \right)  | \mathbb{S})=  M  \left(  \bb{S} \right)  \;
      \forall M \mbox{ s.t. } \supp M = \mathcal{V}.
    \end{aligned}
  \end{align}
\end{thm}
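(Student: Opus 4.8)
The plan is to reduce the statement to the following claim: for any $n,n'\in\N$ and any $M\in\M{n}{\ell}$, $M'\in\M{n'}{\ell}$ with $\supp M=\supp M'$, the set $\mc{S}$ is observationally complete for $M$ if and only if it is observationally complete for $M'$. This suffices, because the projector $\Pi$ onto $\mc{V}$ is itself a map with $\supp\Pi=\mc{V}$, so applying the claim to the pairs $(\Pi,M)$ over all $M$ with $\supp M=\mc{V}$ gives exactly the asserted biconditional (the ``only if'' direction being moreover trivial, since $\Pi$ is one of those $M$). The starting observation is that two maps with the same support differ only by an invertible relabelling of the coordinates spanned by their effects: $\supp M=\supp M'=\mc{V}$ means $\ker M=\ker M'=\mc{V}^{\perp}$, so $M$ and $M'$ restrict to injective maps on $\mc{V}$, and setting $L:=M'\circ(M|_{\mc{V}})^{-1}$ on $\rng M$ and extending $L$ arbitrarily to a linear map $\R^{n}\to\R^{n'}$ yields $M'=LM$, with $L$ a linear bijection from $\rng M$ onto $\rng M'$. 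In particular $L\big(M(\bb{S})\big)=M'(\bb{S})$, $L\big(M(\mc{S})\big)=M'(\mc{S})$, and $L\big(\aff M(\mc{S})\big)=\aff M'(\mc{S})$.

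Next I would transport the optimization defining $\ddi$ through $L$. By Definition~\ref{def:ddi-measurements}, the feasible ranges for $\ddi\big(M(\mc{S})\,\big|\,\bb{S}\big)$ are the sets $N(\bb{S})$, $N\in\M{n}{\ell}$, constrained by $M(\mc{S})\subseteq N(\bb{S})\subseteq\aff M(\mc{S})$; every such range lies inside $\aff M(\mc{S})\subseteq\rng M$, where $L$ acts as a linear bijection onto $\rng M'$. Hence $\mc{R}\mapsto L(\mc{R})$ is a bijection between the feasible ranges of $\ddi\big(M(\mc{S})\,\big|\,\bb{S}\big)$ and those of $\ddi\big(M'(\mc{S})\,\big|\,\bb{S}\big)$: it maps $N(\bb{S})$ to $(LN)(\bb{S})$, preserves both inclusion constraints, and, restricting to an affine bijection between the equidimensional affine hulls $\aff M(\mc{S})$ and $\aff M'(\mc{S})$, it rescales the relevant volume by a fixed nonzero factor (the footnote on volumes of bodies under linear maps). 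Therefore
\begin{align*}
  \ddi\big(M'(\mc{S})\,\big|\,\bb{S}\big)
  =\big\{\,L(\mc{R})\;\big|\;\mc{R}\in\ddi\big(M(\mc{S})\,\big|\,\bb{S}\big)\,\big\},
\end{align*}
and since $L\big(M(\bb{S})\big)=M'(\bb{S})$ this gives $\ddi\big(M(\mc{S})\,\big|\,\bb{S}\big)=\{M(\bb{S})\}$ if and only if $\ddi\big(M'(\mc{S})\,\big|\,\bb{S}\big)=\{M'(\bb{S})\}$, i.e.\ observational completeness for $M$ is equivalent to observational completeness for $M'$.

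The step I expect to demand the most care is checking that $\mc{R}\mapsto L(\mc{R})$ really is a bijection of feasible sets, since $L$ is \emph{not} a global bijection $\R^{n}\to\R^{n'}$ — it is invertible only on $\rng M$. The point that makes this work is the constraint $N(\bb{S})\subseteq\aff\{\mathbf{p}_x\}$ in Definition~\ref{def:ddi-measurements}, which confines the entire optimization to $\aff M(\mc{S})\subseteq\rng M$: nothing the optimization ``sees'' ever leaves the subspace on which $L$ is a bijection, so the (arbitrary) behaviour of $L$ on a complement of $\rng M$ is irrelevant, and the inverse correspondence is implemented by any linear extension of $(L|_{\rng M})^{-1}$. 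A minor companion point, needed to legitimise the volume-rescaling argument, is that $\aff M(\mc{S})$ and $\aff M'(\mc{S})$ have the same dimension, being linearly isomorphic under $L$, so the ``minimum volume'' in the two problems is compared within affine subspaces of equal dimension.
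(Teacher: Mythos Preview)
Your proof is correct and rests on the same mechanism as the paper's: the data-driven inference commutes with an invertible linear map on the relevant subspace, so $\ddi$ depends on $M$ only through $\supp M$. The paper isolates this commutativity as a separate lemma (Lemma~\ref{lmm:commutativity-sm}) and applies it to the factorisation $M=M\Pi$, whereas you build the intertwiner $L$ with $M'=LM$ directly and transport the optimisation by hand; the underlying argument is the same, your framing being slightly more symmetric in that it compares arbitrary $M,M'$ with equal support rather than routing through $\Pi$.
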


\begin{proof}
  We only need  to prove the $\implies$  direction since the
  opposite one is trivially true.   Let us then suppose that
  $  \ddi(  \Pi \left(  \mc{S}  \right)  | \mathbb{S})=  \Pi
  \left( \bb{S} \right)$  and let us fix an  arbitrary $M \in
  \M{n}{\ell}$ such  that $\supp M =  \mathcal{V}$.  Then we
  have     $M    \Pi     =    M$.      By    using     lemma
  \ref{lmm:commutativity-sm} we have
  \begin{align*}
    \begin{aligned}
      &\ddi( M \left( \mc{S}  \right) | \mathbb{S})= \ddi( M
      \Pi \left(  \mc{S} \right) | \mathbb{S})=\\  & M \ddi(
      \Pi \left( \mc{S} \right)  | \mathbb{S})= M \Pi \left(
      \bb{S} \right) = M \left( \bb{S} \right)
    \end{aligned}
  \end{align*}
  and the thesis is proved.
\end{proof}

An immediate corollary of this theorem is:

\begin{cor}\label{thm:completeness-sm}
  A  set of  states  $\mathcal{S}  \subseteq \mathbb{S}$  is
  observationally complete for  any informationally complete
  measurement if  and only  if $\ddi( \mc{S}  | \mathbb{S})=
  \bb{S}$.
\end{cor}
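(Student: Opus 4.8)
The plan is to derive this corollary as a one-line specialization of Theorem~\ref{thm:completenessgeneral-sm}, taking the linear subspace $\mathcal{V}$ to be all of $\R^\ell$. First I would recall that a measurement $M \in \M{n}{\ell}$ is informationally complete precisely when it is injective on the linear span of $\bb{S}$; since $\mathrm{span}\,\bb{S} = \R^\ell$ (the normalized state set lies in a strictly affine $(\ell-1)$-subspace not through the origin, so its span is the whole $\R^\ell$), this is equivalent to $\ker M = \{0\}$, i.e.\ to $\supp M = \R^\ell$. Hence the clause ``observationally complete for any informationally complete measurement'' is literally the same as ``observationally complete for every $M$ with $\supp M = \R^\ell$'', which is exactly the right-hand side of the equivalence in Theorem~\ref{thm:completenessgeneral-sm} when $\mathcal{V} = \R^\ell$.

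Next I would pin down the left-hand side of that equivalence in this case. For $\mathcal{V} = \R^\ell$ the associated orthogonal projector is $\Pi = \openone$, so $\Pi(\mc{S}) = \mc{S}$ and $\Pi(\bb{S}) = \bb{S}$, and the condition $\ddi(\Pi(\mc{S})\,|\,\bb{S}) = \Pi(\bb{S})$ collapses to $\ddi(\mc{S}\,|\,\bb{S}) = \bb{S}$. Substituting $\mathcal{V} = \R^\ell$ into the theorem therefore gives precisely
\begin{align*}
  \ddi(\mc{S}\,|\,\bb{S}) = \bb{S} \iff \ddi(M(\mc{S})\,|\,\bb{S}) = M(\bb{S}) \quad \forall M \text{ s.t. } \supp M = \R^\ell,
\end{align*}
and by Definition~\ref{def:completeness-sm} the right-hand side is exactly the statement that $\mc{S}$ is observationally complete for every informationally complete $M$. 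This proves the corollary.

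The only points needing care — and none of them is a genuine obstacle, since the substantive content (commutation of the $\ddi$ map with the linear map $M$, via Lemma~\ref{lmm:commutativity-sm}) is already packaged inside Theorem~\ref{thm:completenessgeneral-sm} — are: (i) the identification of ``informationally complete'' with the algebraic condition $\supp M = \R^\ell$ in a general probabilistic theory, which follows from $\mathrm{span}\,\bb{S} = \R^\ell$ together with the standing assumption that states span the full linear space; and (ii) the non-vacuousness of the specialization, i.e.\ that informationally complete measurements exist at all, which is immediate since any injective $M$ with $n \ge \ell$ (a rescaled isometry completed to a valid measurement) has full support. With these routine observations in place the corollary is immediate from the theorem.
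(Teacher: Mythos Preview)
Your proposal is correct and follows exactly the approach implicit in the paper, which simply declares the result an ``immediate corollary'' of Theorem~\ref{thm:completenessgeneral-sm}: you have merely spelled out the specialization $\mathcal{V}=\R^\ell$, $\Pi=\openone$, together with the identification of informational completeness with $\supp M=\R^\ell$.
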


This is  the statement of  Theorem~\ref{thm:completeness} in
the Main Text.

If the  set of states $\mathbb{S}$  is (hyper)-spherical, it
is possible to give an explicit characterization of the sets
of states with minimum  cardinality that are observationally
complete for the informationally complete measurements.

\begin{prop}
  Let $\{\mathbb{S},  \mathbb{E}\}$ be a physical  system of
  linear  dimension  $\ell$  and   let  $\mathbb{S}$  be  an
  $(\ell-1)$-dimensional (hyper)-sphere.  Then the following
  conditions are equivalent:
  \begin{enumerate}
  \item\label{item:simplex} $\mc{S}$ is a regular
    $(\ell-1)$--simplex inscribed in $\bb{S}$.
  \item\label{item:minimal} $\ddi( \mc{S} | \mathbb{S})=
    \bb{S} $ and $\mc{S}$ has minimal cardinality.
  \end{enumerate}
\end{prop}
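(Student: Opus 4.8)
The plan is to reduce the proposition to properties of the minimum-volume enclosing ellipsoid, using the identity $\ddi(\mc{X}\mid\bb{S})=\{\mvee(\mc{X})\}$ valid for (hyper)-spherical $\bb{S}$ and established above; under this reduction, condition~\ref{item:minimal} reads ``$\mvee(\mc{S})=\bb{S}$, and $\mc{S}$ has least cardinality among sets with this property''. First I would record two elementary facts. \emph{(i) Affine covariance:} for every invertible affine map $A$ one has $\mvee(A\,\mc{X})=A\,\mvee(\mc{X})$, since $A$ rescales all volumes by the same factor $|\det A|$ and preserves inclusion, so it sends the unique $\mvee$-minimizer (uniqueness being John's theorem~\cite{John2014}) to the unique minimizer of the transformed problem. \emph{(ii) Dimension count:} $\mvee(\mc{X})$ always lies in $\aff(\mc{X})$, so $\mvee(\mc{S})=\bb{S}$ forces $\aff(\mc{S})=\aff(\bb{S})$, an $(\ell-1)$-dimensional affine space; hence $|\mc{S}|\ge\ell$, with equality forcing the points of $\mc{S}$ to be affinely independent, i.e.\ to be the vertices of a genuine $(\ell-1)$-simplex.

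The core lemma is: \emph{the $\mvee$ of a regular $(\ell-1)$-simplex $\Delta_0$ inscribed in a sphere $S_0$ is $S_0$ itself} (this is Ref.~\cite{vandev1992minimal}). I would prove it by symmetry: the group $S_\ell$ of vertex permutations fixes $\Delta_0$, hence by uniqueness fixes $\mvee(\Delta_0)$; since $S_\ell$ acts on $\aff(\Delta_0)$, recentered at the centroid, through its standard irreducible representation, the positive-definite quadratic form defining $\mvee(\Delta_0)$ commutes with all of $S_\ell$ and is therefore scalar by Schur's lemma, so $\mvee(\Delta_0)$ is a ball; the smallest ball containing $\Delta_0$ is its circumsphere $S_0$.

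With these in hand the two implications are short. For \ref{item:simplex}$\Rightarrow$\ref{item:minimal}: if $\mc{S}$ is a regular $(\ell-1)$-simplex inscribed in $\bb{S}$, the core lemma gives $\mvee(\mc{S})=\bb{S}$, i.e.\ $\ddi(\mc{S}\mid\bb{S})=\bb{S}$, while $|\mc{S}|=\ell$ is least possible by~(ii). For \ref{item:minimal}$\Rightarrow$\ref{item:simplex}: by~(ii) the set $\mc{S}$ is the vertex set of some $(\ell-1)$-simplex $\Delta$; write $\Delta=A(\Delta_0)$ with $A$ invertible affine and $\Delta_0\subset S_0$ the standard regular simplex. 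Affine covariance and the core lemma give $\bb{S}=\mvee(\Delta)=A(S_0)$. An invertible affine image of a sphere is again a sphere precisely when the linear part of $A$ is a scalar multiple of an orthogonal matrix, i.e.\ when $A$ is a similarity; so $A$ is a similarity and $\Delta=A(\Delta_0)$ is regular. Finally, since the $\mvee$ of a regular simplex is its circumsphere, $\mvee(\Delta)=\bb{S}$ forces every vertex of $\Delta$ onto $\bb{S}$, so $\Delta$ is inscribed in $\bb{S}$; this last step is where Ref.~\cite{Vince2008} enters in the paper's route.

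I expect the main obstacle to be the core lemma, and within it the assertion that a group-symmetric ellipsoid is a ball when the group acts irreducibly; the rest is bookkeeping. A secondary point to handle with care is the dimension count~(ii): it is what pins the minimal cardinality to exactly $\ell$ and guarantees that a cardinality-$\ell$ minimizer is a non-degenerate simplex rather than a lower-dimensional configuration whose $\mvee$ could never fill out the full sphere $\bb{S}$.
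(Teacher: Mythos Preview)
Your proposal is correct, and for the direction \ref{item:minimal}$\Rightarrow$\ref{item:simplex} it follows a genuinely different route from the paper's. The paper argues regularity via an inradius--circumradius sandwich: from the John-type containment $\mvee(\mc{S})\subseteq(\ell-1)\conv(\mc{S})$ one gets $(\ell-1)r_{\max}\ge R$, while Euler's inequality~\cite{Vince2008} gives $(\ell-1)r_{\max}\le R$; equality forces regularity. You instead use affine covariance of $\mvee$ to transport the problem to a reference regular simplex and then observe that an invertible affine map sending a sphere to a sphere must be a similarity, which preserves regularity. Your route is more self-contained (no external inequality needed beyond John's uniqueness), and it yields the ``inscribed'' conclusion for free since the circumsphere of $\Delta$ is $A(S_0)=\bb{S}$; the paper's route has the advantage of being a direct two-line computation once the cited inequalities are granted. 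One small correction to your commentary: in the paper's argument, Ref.~\cite{Vince2008} supplies the Euler inequality that establishes \emph{regularity}, not merely the inscription step you attribute to it.
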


\begin{proof}
  Let us prove each implication separately:
  
  \noindent         \ref{item:simplex}         $\Rightarrow$
  \ref{item:minimal}
  
  \noindent Let $\mc{A}$  be the regular $(\ell-1)$--simplex
  inscribed        in         $\bb{A}$.         It        is
  known~\cite{vandev1992minimal}   that   $\mvee(\mc{A})   =
  \bb{A}$.
  
  \noindent         \ref{item:minimal}         $\Rightarrow$
  \ref{item:simplex}
  
  \noindent Since  $\mathbb{S}$ is  (hyper)-spherical $\ddi(
  \mc{S}  | \mathbb{S})  = \mvee(  \mc{S}) =  \bb{S}$.  Then
  $\mvee(  \mc{S})$  is  the smallest  (hyper)-sphere  which
  contains  $\mc{S}$.  Clearly  the cardinality  of $\mc{S}$
  must be  greater then $\ell$.   On the other hand,  as the
  proof of the previous item  shows, the regular simplex has
  cardinality  $\ell$   and  is   observationally  complete.
  Therefore $\mc{S}$  must be a  simplex.  We now  show that
  $\mc{S}$ is  regular.  Let us denote  with $\conv(\mc{S})$
  the  convex hull  of  $\mc{S}$ and  let  $r_{max}$ be  the
  radius of the largest sphere inscribed in $\conv(\mc{S})$.
  Since $ \mvee( \mc{S}) \subseteq (\ell -1) \conv(\mc{S}) $
  \cite{boyd2004convex} we have $ (\ell -1) r_{max} \geq R $
  where  $R$ is  the radius  of $\mvee(  \mc{S}) $.   On the
  other hand,  we have  $ (\ell  -1) r_{max}  \leq R  $ from
  Euler inequality ~\cite{Vince2008}.  Therefore $ (\ell -1)
  r_{max} =  R$ which holds  if and  only if the  simplex is
  regular.
\end{proof}

Let us consider the case when  $\ell = 3$ and hence $\bb{S}$
is a  circle.  In  this case,  any regular  polygon $\mc{S}$
with    $n$    vertices    inscribed    in    $\bb{A}$    is
$\mc{U}$--symmetric,   where  $\mc{U}$   is  an   orthogonal
representation of the dihedral group.   Since for $n \ge 3$,
the only  $\mc{U}$--symmetric ellipse is the  circle, due to
Theorem~\ref{thm:inference-sm}                           and
Corollary~\ref{thm:completeness-sm} any such  an $\mc{S}$ is
observationally  complete for  any informationally  complete
measurement.

Let  us now  consider the  case when  $\ell =  4$ and  hence
$\bb{S}$ is a sphere.  In this case, any Platonic solid with
$n$  vertices inscribed  in $\bb{S}$  is $\mc{U}$-symmetric,
where  $\mc{U}$  is  an  orthogonal  representation  of  the
tetrahedral (for tetrahedra),  octahedral (for octahedra and
cubes),  or  icosahedral  (for  icosahedra  or  dodecahedra)
group.  Since the only  $\mc{U}$--symmetric ellipsoid is the
sphere,     due    to     Theorem~\ref{thm:inference}    and
Corollary~\ref{thm:completeness-sm} any such  an $\mc{S}$ is
observationally  complete for  any informationally  complete
measurement.

\subsection{Data-driven reconstruction of states}

In the  protocol of data-driven reconstruction  of family of
states,   Bob   is  in   charge   of   building  the   tests
(binary-outcome measurements) $\mc{E}$  corresponding to the
boxes equipped with  one button and two light  bulbs each in
the lower part of Fig.~\ref{fig:experimental-setups-2}.  His
aim  is to  enable Alice  to correctly  infer the  family of
states $R$, corresponding to the box with $n$ buttons, up to
the equivalence  of Theorem \ref{thm:equivalence}.   In this
case, we  say that $\mc{E}$ is  observationally complete for
$R$.

\begin{dfn}[Observational complete set of effects]
  \label{def:completenesseffect-sm}
  Let $\{\mathbb{S},  \mathbb{E}\}$ be a physical  system of
  linear  dimension $\ell$  and $R  \in \M{n}{\ell}  $ be  a
  family of states.  A set of effects $\mathcal{E} \subseteq
  \mathbb{E}$ is \textit{observationally  complete for $R$} if
  and only if
  \begin{align}
    \label{eq:completeness-sm}
    \ddi( R  \left(  \mc{E}  \right) | \mathbb{E})= 
    R \left( \bb{E} \right)
  \end{align}
  where $\ddi$ is the data driven inference of states.
\end{dfn}

Clearly,        the        analogous       of        theorem
\ref{thm:completenessgeneral-sm} holds

\begin{thm}
  Let $\{\mathbb{R},  \mathbb{E}\}$ be a physical  system of
  linear  dimension $\ell$  and  let $\mathcal{E}  \subseteq
  \mathbb{E}$ be a  set of effects.  Let  $\mathcal{V}$ be a
  linear subspace of $\mathbb{R}^\ell$  and let $\Pi$ denote
  the  projector on  $\mathcal{V}$.   Then $\mathcal{E}$  is
  observationally complete  for $\Pi$ if  and only of  it is
  observationally complete for any  $R \in \M{n}{\ell}$ such
  that $\supp R = \mathcal{V}$, i.e.
  \begin{align}
    \begin{aligned}
      & \ddi(  \Pi \left( \mc{E} \right)  | \mathbb{E})= \Pi
      \left( \bb{E} \right) \iff \\  & \ddi( R \left( \mc{E}
      \right)  |  \mathbb{E})=  R\left(  \bb{E}  \right)  \;
      \forall R \mbox{ s.t. } \supp R = \mathcal{V}.
    \end{aligned}
  \end{align}
\end{thm}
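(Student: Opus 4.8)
The plan is to mirror, \emph{mutatis mutandis}, the proof of Theorem~\ref{thm:completenessgeneral-sm}, interchanging the roles of states and effects: the state space $\bb{S}$ is replaced by the effect space $\bb{E}$, the measurement map $M$ by the family-of-states map $R$, and the data-driven inference of measurements by the data-driven inference of states of Definition~\ref{def:ddi-states}. As in the measurement case, only the forward implication requires work: the converse is immediate, since $\Pi$, regarded as a linear map with $\supp\Pi=\mathcal{V}$, is itself one of the maps $R$ over which the right-hand side quantifies, so observational completeness for every such $R$ yields it for $\Pi$ as a special case.

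For the forward implication I would assume $\ddi(\Pi(\mc{E})|\bb{E})=\Pi(\bb{E})$ and fix an arbitrary $R\in\M{n}{\ell}$ with $\supp R=\mathcal{V}$. The one algebraic fact needed is that $\supp R=\mathcal{V}$ forces $\ker R=\mathcal{V}^{\perp}$, so $R$ annihilates $\mathcal{V}^{\perp}$ and hence $R\Pi=R$; moreover $R$ is injective on $\mathcal{V}$, which is precisely the subspace on which the inference $\ddi(\Pi(\mc{E})|\bb{E})$ takes place. I would then invoke the state-space analogue of the commutativity Lemma~\ref{lmm:commutativity-sm} — that $\ddi(\cdot\,|\,\bb{E})$ commutes with a linear map that is injective on the affine hull of its input — to obtain $\ddi(R\,\Pi(\mc{E})|\bb{E})=R\,\ddi(\Pi(\mc{E})|\bb{E})$. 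Chaining these identities,
\begin{align*}
  \ddi(R(\mc{E})|\bb{E})
  &=\ddi(R\,\Pi(\mc{E})|\bb{E})
   =R\,\ddi(\Pi(\mc{E})|\bb{E})\\
  &=R\,\Pi(\bb{E})=R(\bb{E}),
\end{align*}
which is precisely the statement that $\mc{E}$ is observationally complete for $R$, and the proof is complete.

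The step I expect to be the real obstacle is justifying the commutativity invoked above in the state-inference setting, i.e.\ the analogue of Lemma~\ref{lmm:commutativity-sm} adapted to Definition~\ref{def:ddi-states}. Unlike the measurement case, the optimization defining $\ddi(\cdot\,|\,\bb{E})$ carries two extra ingredients — the forced inclusion of the images $\mathbf{0}$ and $\mathbf{1}$ of the null and of the deterministic effect, and the linear constraint $R(e)=\mathbf{1}$ — and the ambient space is the linear span $\spn\{\mathbf{p}_x,\mathbf{1}\}$ rather than an affine hull. One must therefore verify that a linear map $L$ injective on that span (i) carries feasible ranges for $\ddi(\mc{X}|\bb{E})$ bijectively onto feasible ranges for $\ddi(L(\mc{X})|\bb{E})$, using that $L$ fixes $\mathbf{0}$ and is compatible with the deterministic-effect normalization, and (ii) rescales all their volumes by the common Jacobian factor, so that the $\argmin$ sets correspond under $L$. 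Granting that lemma, the displayed computation is the entire proof; everything else is a transcription of the argument already given for measurements.
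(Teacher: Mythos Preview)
Your proposal is correct and matches the paper's approach exactly: the paper's proof consists of the single sentence ``The proof of this result is completely analogous to the proof of Theorem~\ref{thm:completenessgeneral-sm},'' and you have spelled out precisely that analogy. Your flagged concern about adapting Lemma~\ref{lmm:commutativity-sm} to the state-inference setting of Definition~\ref{def:ddi-states}---with its forced inclusion of $\mathbf{0}$ and $\mathbf{1}$, the span rather than affine-hull constraint, and the normalization $R(e)=\mathbf{1}$---is a genuine subtlety that the paper does not address; in this respect you have been more careful than the original.
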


\begin{proof}
  The proof of this result is completely analogous to the
  proof of Theorem~\ref{thm:completenessgeneral-sm}.
\end{proof}

\section{Technical Lemmas}

\subsection{Affine transformations of a strictly affine set}

For  any  $n \in  \N$,  any  $D  \in \M{n}{\ell}$,  and  any
$\mathbf{d} \in \R^n$,  let $\mc{F}_{D, \mathbf{d}}: \R^\ell
\to  \R^n$  denote   the  affine  map  such   that  for  any
$\mathbf{a} \in \R^\ell$ one has
\begin{align*}
  \mc{F}_{D,\mathbf{d}}(\mathbf{a})   :=   D  \mathbf{a}   +
  \mathbf{d}.
\end{align*}
For  any set  $\mc{A} \subseteq  \R^\ell$ we  adopt the  set
builder notation
\begin{align*}
  \mc{F}_{D,\mathbf{d}}  \left(  \mc{A} \right)  :=  \left\{
  \mc{F}_{D,\mathbf{d}} \left(  \mathbf{a} \right)  \; |
  \; \mathbf{a} \in \mc{A} \right\}.
\end{align*}
For   any  $n   \in  \bb{N}$   and  any   $\mc{A}  \subseteq
\bb{R}^\ell$, let  $\mc{F}_n(\mc{A})$ denote the set  of all
affine transformations of $\mc{A}$ into $\R^n$, that is
\begin{align*}
  \mc{F}_n   \left(\mc{A}\right)   :=   \left\{   \mc{F}_{D,
    \mathbf{d}} \left( \mc{A} \right)  \; \Big| \; \forall D
  \in  \M{n}{\ell}, \;  \forall \mathbf{d}  \in \bb{R}^n  \;
  \right\}.
\end{align*}

We say that  an affine subspace is {\em  strictly} affine if
and only if it is not a  linear subspace. For any $n \in \N$
and  any  $\mc{X} \in  \R^n$,  let  $\aff\mc{X}$ denote  the
affine hull  of $\mc{X}$.  Let $\mathbf{u}  \in \aff \bb{A}$
denote the vector orthogonal to $\aff \bb{A}$.  Without loss
of generality we take $\mathbf{u}$ such that $|\mathbf{u}|_2
= 1$.

\begin{lmm}
  \label{lmm:affine}
  For any $n  \in \N$ and any $\bb{A}$ subset  of a strictly
  affine subspace of $\bb{R}^\ell$ one has
  \begin{align*}
    \mathcal{D}_n  \left( \bb{A}  \right) =  \mc{F}_n \left(
    \bb{A} \right).
  \end{align*}
(see Eq.~\eqref{eq:Dn} for the definition of $\mathcal{D}_n$.)
\end{lmm}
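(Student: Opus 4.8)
The plan is to establish the two inclusions $\mathcal{D}_n(\bb{A}) \subseteq \mc{F}_n(\bb{A})$ and $\mc{F}_n(\bb{A}) \subseteq \mathcal{D}_n(\bb{A})$ separately. One of them is trivial: every linear map $D \in \M{n}{\ell}$ is the affine map $\mc{F}_{D,\mathbf{0}}$, so $D(\bb{A}) = \mc{F}_{D,\mathbf{0}}(\bb{A}) \in \mc{F}_n(\bb{A})$, whence $\mathcal{D}_n(\bb{A}) \subseteq \mc{F}_n(\bb{A})$. All the content lies in the reverse inclusion, that is, in showing that when an affine map is restricted to $\bb{A}$ its additive constant can always be absorbed into the linear part. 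The tool for this is precisely the hypothesis that $\bb{A}$ lies in a \emph{strictly} affine subspace.

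First I would make explicit the one consequence of that hypothesis that is needed. Writing $\aff\bb{A} = \mathbf{a}_0 + L$ with $L$ the associated direction (linear) subspace, strict affineness says that $\mathbf{0} \notin \aff\bb{A}$, equivalently that the component $\mathbf{a}_0^{\perp}$ of $\mathbf{a}_0$ orthogonal to $L$ is nonzero. Taking $\mathbf{u} := \mathbf{a}_0^{\perp}/|\mathbf{a}_0^{\perp}|_2$ --- the unit vector orthogonal to $\aff\bb{A}$ introduced before the statement --- one has $\mathbf{u}^{T}\mathbf{a} = \mathbf{u}^{T}\mathbf{a}'$ for all $\mathbf{a},\mathbf{a}' \in \aff\bb{A}$ (their difference lies in $L \perp \mathbf{u}$), so there is a constant $c$ with $\mathbf{u}^{T}\mathbf{a} = c$ for every $\mathbf{a} \in \bb{A}$, and $c = \mathbf{u}^{T}\mathbf{a}_0 = |\mathbf{a}_0^{\perp}|_2 \neq 0$.

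Given an arbitrary affine transformation $\mc{F}_{D,\mathbf{d}}$, $D \in \M{n}{\ell}$, $\mathbf{d} \in \R^{n}$, I would then produce the linear map
\begin{equation*}
  D' := D + \tfrac{1}{c}\,\mathbf{d}\,\mathbf{u}^{T} \ \in\ \M{n}{\ell}.
\end{equation*}
For every $\mathbf{a} \in \bb{A}$ one computes $D'\mathbf{a} = D\mathbf{a} + \tfrac{1}{c}\,\mathbf{d}\,(\mathbf{u}^{T}\mathbf{a}) = D\mathbf{a} + \mathbf{d} = \mc{F}_{D,\mathbf{d}}(\mathbf{a})$, hence $D'(\bb{A}) = \mc{F}_{D,\mathbf{d}}(\bb{A})$, so $\mc{F}_{D,\mathbf{d}}(\bb{A}) \in \mathcal{D}_n(\bb{A})$. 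Since $D$ and $\mathbf{d}$ are arbitrary this yields $\mc{F}_n(\bb{A}) \subseteq \mathcal{D}_n(\bb{A})$, completing the proof.

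I do not anticipate a genuine obstacle; the only point that deserves care is that strict affineness cannot be dropped. If $\bb{A}$ spanned a linear subspace through the origin, then $\mathbf{u}^{T}\mathbf{a}$ could vanish on $\bb{A}$, there would be no functional to rescale, and a nonzero translation could never be reproduced by a linear map (which fixes the origin). This is exactly the structural reason why, for a causal theory with normalization $e(\rho)=1$, measurement ranges coincide with the \emph{affine}, rather than merely linear, images of $\bb{S}$ --- here $\mathbf{u}$ is just (the coordinate vector of) the deterministic effect $e$ and $c=1$.
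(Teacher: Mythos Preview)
Your proposal is correct and follows essentially the same route as the paper: observe the trivial inclusion $\mathcal{D}_n(\bb{A})\subseteq\mc{F}_n(\bb{A})$, then absorb the translation into the linear part via the rank-one correction $D' = D + \mathbf{d}\,\mathbf{u}^T$ using the vector $\mathbf{u}$ orthogonal to the direction of $\aff\bb{A}$. The only cosmetic difference is that the paper normalizes $\mathbf{u}$ so that $\mathbf{u}^T\mathbf{a}=1$ on $\bb{A}$ (hence no explicit $1/c$ factor), whereas you keep the constant $c\neq 0$ visible; your version is arguably cleaner on this point.
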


\begin{proof}
  Of course  $\mathcal{D}_n ( \bb{A} )  \subseteq \mc{F}_n (
  \bb{A}  )$,  so   we  only  need  to   prove  the  inverse
  inclusion. For any $D \in \M{n}{\ell}$ and $\mathbf{d} \in
  \R^n$, one has that $D' := D + \mathbf{d} \mathbf{u}^T$ is
  such that  $\mc{F}_{D, \mathbf{d}}(a) = D'(a)$  for any $a
  \in \bb{A}$. Hence the thesis follows.
\end{proof}

\subsection{Commutativity of the DDI map}

\begin{lmm}
  \label{lmm:commutativity-sm}
Let  $D \in  \M{n}{\ell}$, and  
  $\mc{A}  \subseteq \R^\ell$  such  that $\mc{A}  \subseteq
  \supp D$ one has
  \begin{align}
    \label{eq:commutativity-sm}
    D ( \ddi  (\mc{A} | \mathbb{A} ) )=
    \ddi (D
      (\mc{A})| \mathbb{A}).
  \end{align}
\end{lmm}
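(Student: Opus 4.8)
The plan is to reduce the identity to a change-of-variables argument: I would exhibit a bijection between the feasible sets of the two minimization problems in Eq.~\eqref{eq:commutativity-sm} which is implemented by $D$ in one direction and by the pseudoinverse $D^+$ in the other, and which rescales the objective $\vol$ by a fixed positive constant. The sole role of the hypothesis $\mc{A}\subseteq\supp D$ is to make $D$ injective on the affine hull of the data, so that this bijection is well defined and $D^+$ really inverts $D$ on all the sets that occur.

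First I would record the linear-algebra facts to be used. Set $P:=D^+D$ and $\Pi:=DD^+$; these are the orthogonal projectors onto $\supp D\subseteq\R^\ell$ and onto $\rng D\subseteq\R^n$, so $P\mathbf{a}=\mathbf{a}$ for $\mathbf{a}\in\supp D$ and $\Pi\mathbf{b}=\mathbf{b}$ for $\mathbf{b}\in\rng D$. Since $\mc{A}\subseteq\supp D$ and $\supp D$ is a linear subspace, also $\aff\mc{A}\subseteq\spn\mc{A}\subseteq\supp D$. By linearity of $D$ one has $D(\aff\mc{A})=\aff(D\mc{A})$, and the restriction $D|_{\aff\mc{A}}\colon\aff\mc{A}\to\aff(D\mc{A})$ is an affine isomorphism whose inverse is $D^+|_{\aff(D\mc{A})}$, because $D^+D\mathbf{a}=\mathbf{a}$ on $\aff\mc{A}$ and $DD^+\mathbf{b}=\mathbf{b}$ on $\aff(D\mc{A})\subseteq\rng D$.

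Next I would build the bijection. A set is feasible for $\ddi(\mc{A}|\bb{A})$ precisely when it is of the form $N(\bb{A})$ for a linear $N$ with $\mc{A}\subseteq N(\bb{A})\subseteq\aff\mc{A}$, and feasible for $\ddi(D(\mc{A})|\bb{A})$ when it is of the form $N'(\bb{A})$ with $D\mc{A}\subseteq N'(\bb{A})\subseteq\aff(D\mc{A})$. I would check that $\mc{R}\mapsto D(\mc{R})$ and $\mc{R}'\mapsto D^+(\mc{R}')$ map one feasible family into the other: $D(\mc{R})=(D\circ N)(\bb{A})$ is again a range of a linear map, it contains $D\mc{A}$ by monotonicity, and it lies in $D(\aff\mc{A})=\aff(D\mc{A})$; symmetrically $D^+(\mc{R}')=(D^+\circ N')(\bb{A})$ is a range, contains $D^+D\mc{A}=\mc{A}$, and lies in $D^+(\aff(D\mc{A}))=\aff\mc{A}$. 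Moreover these two maps are mutually inverse: $D^+(D(\mc{R}))=\mc{R}$ since $\mc{R}\subseteq\aff\mc{A}\subseteq\supp D$ and $D^+D$ fixes $\supp D$, while $D(D^+(\mc{R}'))=\mc{R}'$ since $\mc{R}'\subseteq\aff(D\mc{A})\subseteq\rng D$ and $DD^+$ fixes $\rng D$.

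Finally I would invoke the constant-Jacobian property noted in the Main Text: every feasible $\mc{R}$ sits inside $\aff\mc{A}$, and $D$ acts on all of them through the single affine isomorphism $D|_{\aff\mc{A}}$, so there is one constant $c>0$ --- the absolute Jacobian of $D|_{\aff\mc{A}}$ with respect to Euclidean volume on the two affine hulls, nonzero because that restriction is injective --- with $\vol(D(\mc{R}))=c\,\vol(\mc{R})$ for every feasible $\mc{R}$. Rescaling the objective by a fixed positive constant does not move the minimizers, so the bijection above carries $\argmin$ to $\argmin$, which is exactly $D(\ddi(\mc{A}|\bb{A}))=\ddi(D(\mc{A})|\bb{A})$. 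I expect the only delicate point to be the bookkeeping around the non-invertibility of $D$: one must use $\mc{A}\subseteq\supp D$ (hence $\aff\mc{A}\subseteq\supp D$) to guarantee that $D^+$ undoes $D$ on all feasible ranges, and one must verify the two containment identities $D(\aff\mc{A})=\aff(D\mc{A})$ and $D^+(\aff(D\mc{A}))=\aff\mc{A}$ so that the constraint ``range contained in the affine hull of the data'' matches on both sides; once these are settled, the volume-scaling step is routine.
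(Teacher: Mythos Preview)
Your proposal is correct and follows essentially the same strategy as the paper's proof: both set up the two feasible families, exhibit the bijection $\mc{R}\mapsto D(\mc{R})$ with inverse $D^+$ (using $\mc{A}\subseteq\supp D$ so that $D^+D$ fixes every feasible $\mc{R}$ and $DD^+$ fixes every feasible $\mc{R}'$), and then invoke the fixed volume-scaling factor to transfer the $\argmin$. Your write-up is simply more explicit about verifying the feasibility constraints and the affine-hull identities than the paper's terse version.
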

\begin{proof}
  By    definition,    the    l.h.s.    and    r.h.s.     of
  Eq.~\eqref{eq:commutativity-sm}     are      given     by,
  respectively,
  \begin{align*}
    D ( \ddi  (\mc{A} | \mathbb{A} ) )   &  :=
                                 D  (   \argmin_{\mc{X}  \in \mathfrak{X}}  \vol ( \mc{X} ) ),\\
  \ddi (D      (\mc{A})| \mathbb{A})   &  :=
                                \argmin_{\mc{Y}  \in\mathfrak{Y}} \vol ( \mc{Y} ),
  \end{align*}
  with
  \begin{align*}
    \mathfrak{X}       & :=     \{       \mc{X}      \in
    \mc{D}_\ell\left(\bb{A}\right)   \;   |  \;   \mc{A}
    \subseteq \mc{X} \subseteq \aff \mc{A} \},\\
    \mathfrak{Y}       & :=      \{       \mc{Y}      \in
    \mc{D}_n\left(\bb{A}\right) \; |  \; D \left(
    \mc{A} \right) \subseteq  \mc{Y} \subseteq \aff D
    \left( \mc{A} \right) \}.
  \end{align*}
  The map  $D$   is  bijective  from   $\mathfrak{X}$  to
  $\mathfrak{Y}$. This can be easily seen as follows.  Since
  $\mc{A}   \subseteq    \supp   D$,   by    definition   of
  $\mathfrak{X}$ for  any $\mc{X} \in \mathfrak{X}$  one has
  $\mc{X}  \subseteq  \supp  D$.   Also,  by  definition  of
  $\mathfrak{Y}$ for  any $\mc{Y} \in \mathfrak{Y}$  one has
  $\mc{Y} \subseteq \supp D^+$.
    Moreover, $D$ preserves the  ordering induced by function
  $\vol$, that is:
  \begin{align*}
    \vol  \left(  \mc{X}  \right) =  \lambda_D  \vol  \left(
    \mc{L}_D \left( \mc{X} \right) \right) \quad \forall \mc{X} \in \mathfrak{X},
  \end{align*}
  for some $\lambda_D > 0$ that only depends on $D$.
  Hence the statement follows.
\end{proof}

\bibliography{bibliography}

\end{document}